\theoremstyle{thmstyleone}%
\theoremstyle{thmstyletwo}%
\theoremstyle{thmstylethree}%
\newtheorem{theorem}{Theorem}%
\newtheorem{definition}{Definition}%
\newtheorem{proposition}[theorem]{Proposition}%
\newtheorem{lemma}[theorem]{Lemma}%
\renewcommand{\x}{\bm{x}}
\renewcommand{\y}{\bm{y}}
\newcommand{\Xp}{\mathbf{X}_p}
\newcommand{\Xd}{\mathbf{X}_d}
\newcommand{\Xs}{\mathbf{X}_s}
\newcommand{\xp}{\bm{x}_p}
\newcommand{\xd}{\bm{x}_d}
\newcommand{\fp}{f_p}
\newcommand{\fd}{f_d}
\newcommand{\fs}{f_s}
\newcommand{\F}{\mathbf{F}}
\newcommand{\G}{\mathbf{G}}
\newcommand{\n}{\mathbf{n}}
\newcommand{\Ho}{H_\circ}
\begin{document}

\title{Bounding seed loss from isolated habitat patches}

\author*[1]{\fnm{Benjamin} \sur{Hafner}}\email{benrhafner@gmail.com}

\author*[1]{\fnm{Katherine} \sur{Meyer}}\email{kjmeyer@carleton.edu}

\affil[1]{\orgdiv{Math and Statistics Department}, \orgname{Carleton College}, \\ \orgaddress{\street{One North College Street}, \city{Northfield}, \postcode{55057}, \state{MN}, \country{USA}}}


\abstract{Dispersal of propagules (seeds, spores) from a geographically isolated population into an uninhabitable matrix can threaten population persistence if it prevents new growth from keeping pace with mortality. Quantifying propagule loss can thus inform restoration and conservation of vulnerable populations in fragmented landscapes. To model propagule loss in detail, one can integrate dispersal kernels (probabilistic descriptions of dispersal) and plant densities. However, one might lack the detailed spatial information and computational tools needed for such integral models. Here we derive two simple upper bounds on the probability of propagule loss---one assuming rotational symmetry of dispersal and the other not---that require only habitat area, habitat perimeter, and the mean dispersal distance of a propagule. We compare the bounds to simulations of integral models for the population of \textit{Asclepias syriaca} (common milkweed) at McKnight Prairie---a 13.7 hectare reserve surrounded by agricultural fields in Goodhue County, Minnesota---and identify conditions under which the bounds closely estimate propagule loss.
}

\keywords{model, propagules, seeds, dispersal kernel, habitat fragmentation, spillover}

\maketitle

\vspace{-0.5cm}

\bmhead{Acknowledgments}
We are grateful for feedback on a draft of this manuscript from the Shaw Lab and associates at UMN, including Allison Shaw, Dennis Kim, Martha Torstenson, Naven Narayanan, Jessica Valiarovski, and Matt Michalska-Smith. Carleton College supported KM's work on the project during a sabbatical.

\section{Introduction}\label{sec1}

This paper concerns the probability that propagules (seeds, spores) from a population in an isolated habitat patch disperse outside of that patch. 
In the right conditions, dispersal may help individuals reach environments with lower conspecific competition, fewer natural enemies, and new opportunities for establishment \citep{beckman2023causes,levin2003ecology}. 
Dispersal can also support metapopulation persistence despite local extinctions by allowing a species to (re)colonize patches and bet-hedge against fluctuating habitat quality
\citep{levin2003ecology,hanski2001population}.
But for a small site surrounded by a large matrix of unviable
habitat, dispersal out of the patch threatens local population persistence when emigration outweighs the arrival of propagules from external sources  \citep{hanski2001population}.
Cheptou and colleagues \citeyearpar{cheptou2008rapid} documented a striking urban example of this dispersal risk in the weed \textit{Crepis sancta}, whose dispersing seeds departed from tree plantings an estimated 55\% of the time, resulting in selection for a nondispersing seed variant. 

The proportion of propagules that leave viable habitat emerged as a key parameter for predicting population crashes in the colonization-mortality model of Cooney et al. \citeyearpar{cooney2022effect}---a reformulation of Tilman's \citeyear{tilman1994habitat} spatially implicit extinction debt model  that includes spatial dispersal processes. 
Quantifying propagule movement from inside to outside of habitat patches 
could thus help managers to identify either
isolated remnant populations that are at risk of decline 
or
sites that are suitable to support restored populations. 

Propagule loss from a habitat patch can be modeled in detail by integrating functions that represent the density of propagule sources and the probabilistic patterns of propagule dispersal (the dispersal kernel of the propagule source) \citep{cooney2022effect,klausmeier1998extinction}. In particular, the density of propagules arriving at a location $(x_0,y_0)$ in $\mathbb{R}^2$ is given by \citep{nathan2012dispersal}
\begin{equation}\label{eq:nathan}
    \substack{\text{propagule arrival}\\ \text{density at } (x_0,y_0)}=\iint\limits_{\text{habitat}} \left(\substack{\text{propagule} \\ \text{source density}\\ \text{at }(x,y)} \right)\left(\substack{\text{dispersal kernel}\\ \text{evaluated at} \\ (x_0-x,y_0-y)} \right) \, dA
\end{equation}
and total propagule loss can be obtained by integrating propagule arrival density over the complement of the habitat patch:
\begin{equation}\label{eq:seedloss}
    \text{propagule loss }=\iint\limits_{\substack{\text{not} \\ \text{habitat}}} \left(\substack{\text{propagule}\\ \text{arrival density}}\right) \ dA.
\end{equation}
This approach incorporates information about the spatial distribution of both propagule sources and propagule dispersal. In many  studies, knowledge of these attributes may  be incomplete.
\begin{figure}[h!]
    \centering
    \includegraphics[width=0.25\textwidth]{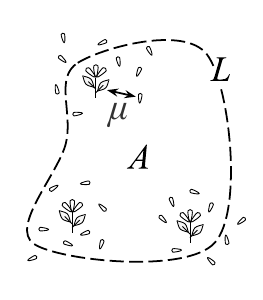}
    \vspace{1em}
    \caption{Schematic of quantities used to bound seed loss, with the habitat boundary depicted as a dashed line}
    \label{fig:intro}
\end{figure}

Here we derive an estimate of propagule loss that could be especially useful when one has only rough information on the location of propagule sources and the patterns of propagule dispersal. For brevity, we refer to propagule sources as plants and propagules as seeds. We make the simplifying assumptions that 
\begin{itemize}
    \item[(i)] plants are uniformly distributed within a viable habitat and
    \item[(ii)] the dispersal kernel associated with a plant is independent from its location in the habitat.
\end{itemize} 
In this context, we can bound the probability of seed loss $p$ from a population in an isolated habitat patch by
\begin{equation}\label{eq:main1}
p\leq \frac{\mu L}{2A}
\end{equation}
where $A$ is the habitat area, $L$ is the perimeter, and $\mu$ is the mean seed dispersal distance (illustrated in Figure \ref{fig:intro}).
If in addition we assume that dispersal patterns are identical in each direction---that is, rotationally symmetric---then we can decrease this upper bound to
\begin{equation}\label{eq:main2}
    p\leq \frac{\mu L}{\pi A}.
\end{equation}
Given that patch area $A$ appears in the denominator of these expressions, while patch perimeter $L$ appears in the numerator, we expect the seed loss bound to decrease as we scale up a habitat region while maintaining its shape (since $L$ grows linearly while $A$ grows quadratically). Loosely speaking, the ratio $A/L$ can be interpreted as the habitat's characteristic length scale, which must be kept larger than $\mu$ in order to limit seed loss. 

In addition, we observe that these bounds are tight, in two senses. In a practical sense, when applied to an ecologically relevant plant population, the bounds are close to the true value of $p$, especially in the rotationally symmetric case when the habitat is large relative to the mean dispersal distance (see Section \ref{sec:examples}). And from a mathematical perspective, they are tight in the sense that the constant factors $1/2$ and $1/\pi$ cannot be improved (see Appendix A).

Though simple to state, inequalities \eqref{eq:main1} and \eqref{eq:main2} require some creative vector calculus to prove. We begin by establishing a probabilistic modeling framework to precisely define $p$ in Section \ref{sec:modeling}. Then, in Section \ref{sec:vf}, we translate the spatial geometry of seed movement from dispersal kernels to vector fields and use the divergence theorem to reframe seed loss as a flux of seeds across the habitat boundary. In Section \ref{sec:bound} we bound seed flux across the habitat boundary. We put all the pieces together in Theorem \ref{thm:main}, which formalizes the claims \eqref{eq:main1} and \eqref{eq:main2}.
Once the proof of Theorem \ref{thm:main} is complete, we provide an example of how it could be applied in a conservation context in Section \ref{sec:examples}. Specifically, we compute seed loss bounds for a population of \textit{Ascelpias syriaca} (common milkweed) located at McKnight Prairie, an isolated reserve surrounded by agriculture, and show it closely matches an integral model of seed loss. In Section \ref{sec:disc} we discuss methods for estimating mean dispersal distance as well as future theoretical directions.

\section{Modeling framework}\label{sec:modeling} 

Throughout this text, given sets $U$ and $V\subset\mathbb{R}^2$ we use $\partial U$ to denote the boundary of $U$, $U^c$ for set complement, and $U\backslash V$ for the set difference $U\cap V^c$.

We represent viable habitat for a plant population of interest as an open subset $H\subset\mathbb{R}^2$
with finite area $A$ and perimeter $L$. 
We treat plant positions, seed displacements, and seed landing locations as continuous random vectors with values in $\mathbb{R}^2$.

The random vector $\Xp$ models the position of plants in the population. 
Recall that $\Xp$ relates to its probability density function $\fp$ as follows: given a set $U\subset\mathbb{R}^2$,
\begin{equation}
P(\Xp\in U)=\iint\limits_{\x\in U} \fp(\x) \, dA.
\end{equation}
Here $dA$ refers to an infinitesimal area of integration (e.g. $dx_1 dx_2$) and does not imply a relationship to total habitat area $A$. We assume plant positions are uniformly distributed over $H$, so 
\begin{equation}\label{eq:fp}
\fp(\x)=\begin{cases}1/ A & \text{if } \x\in H \\
0 & \text{otherwise}
\end{cases}
\end{equation}

The seeds released by a plant are displaced by a random dispersal vector $\Xd$ from the source plant before they land. The probability density function $\fd$ associated with $\Xd$ is also known as a dispersal kernel \citep{nathan2012dispersal}. Importantly, we assume $\Xp$ and $\Xd$ are independent: the position of a source plant within the habitat does not influence the dispersal pattern of its seeds (see Section \ref{sec:disc} for discussion of this assumption). The mean dispersal distance of seeds is given by
\begin{equation}\label{eq:mu}
\mu=E[ \, |\Xd| \, ]=\iint\limits_{\x\in \mathbb{R}^2} |\x|\fd(\x) \, dA.
\end{equation}
We assume that $\fd$
decays rapidly enough as $|\x|\to\infty$ to make $\mu$ well-defined. 
Commonly used functional forms of $\fd$ include the inverse Gaussian for wind-dispersed species and the exponential power function \citep{bullock2017synthesis,nathan2012dispersal}.

\begin{figure}[h]
    \centering
    \includegraphics[width=\textwidth]{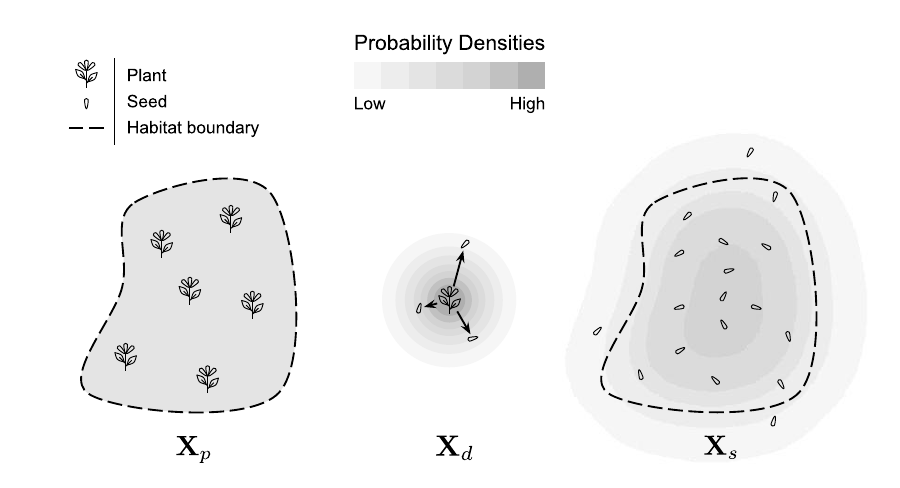}
    \caption{
    Visualization of the probability densities associated with random variables for plant locations $\Xp$, seed dispersal $\Xd$, and seed landing locations $\Xs$ 
    }
    \label{fig:PDFs}
\end{figure}

We define the random vector $\Xs$ by
\begin{equation}
    \Xs = \Xp + \Xd
\end{equation}
to model the landing position of a seed that originated from the plant population in $H$. 
We assume that $H^c$ is not viable habitat for the population of interest, so seeds that originate in $H$ but land in $H^c$ are lost. Our main quantity of interest, the probability $p$ of seed loss from the habitat, is given by
\begin{equation}\label{eq:pdef}
    p=P(\Xs\in H^c)=\iint\limits_{\x\in H^c}\fs(\x) \, dA
\end{equation}
where $\fs$ is the probability density of $\Xs$.

As a first step towards bounding seed loss probability $p$, we unpack $p$'s defining integral from (\ref{eq:pdef}) in terms of habitat area $A$ and dispersal kernel $\fd$.
\begin{proposition}\label{prop:pHfd}
The seed loss probability $p$ defined in equation (\ref{eq:pdef}) is equivalent to
\begin{equation}\label{eq:pHfd}
p=\frac{1}{A}\iint\limits_{\xp \in H} \ \ \iint\limits_{\x \in H^c} \fd(\x - \xp) \, dA \, dA.
\end{equation}
\end{proposition}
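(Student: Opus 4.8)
The plan is to express the density $\fs$ of $\Xs = \Xp + \Xd$ as a convolution of the densities $\fp$ and $\fd$, substitute this into the defining integral \eqref{eq:pdef} for $p$, and then simplify using the explicit piecewise form \eqref{eq:fp} of $\fp$. Concretely, since $\Xp$ and $\Xd$ are independent, the density of their sum is the convolution
\begin{equation}
\fs(\x) = \iint\limits_{\xp \in \mathbb{R}^2} \fp(\xp)\,\fd(\x - \xp) \, dA,
\end{equation}
and inserting this into \eqref{eq:pdef} gives
\begin{equation}
p = \iint\limits_{\x \in H^c} \ \ \iint\limits_{\xp \in \mathbb{R}^2} \fp(\xp)\,\fd(\x - \xp) \, dA \, dA.
\end{equation}

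Next I would interchange the order of integration. Because the integrand is a product of probability densities, it is nonnegative, so Tonelli's theorem applies with no extra integrability hypothesis and yields
\begin{equation}
p = \iint\limits_{\xp \in \mathbb{R}^2} \fp(\xp) \ \ \iint\limits_{\x \in H^c} \fd(\x - \xp) \, dA \, dA.
\end{equation}
Finally, substituting $\fp(\xp) = 1/A$ for $\xp \in H$ and $\fp(\xp) = 0$ otherwise collapses the outer region of integration to $H$ and pulls the constant $1/A$ outside, producing exactly \eqref{eq:pHfd}.

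The only point needing care is the convolution identity for $\fs$ — equivalently, the assertion that $\Xs$ is absolutely continuous with the stated density. This is a standard consequence of independence, but if one wants it self-contained it can be obtained directly: write $P(\Xs \in U) = P(\Xp + \Xd \in U)$, use independence to factor the joint density of $(\Xp,\Xd)$ as $\fp(\xp)\fd(\xd)$, and change variables via $\x = \xp + \xd$ in the $\xd$ integral. Once that is established, every remaining step is a routine application of Tonelli's theorem and the definition of $\fp$, so I do not anticipate a genuine obstacle; the proposition is essentially a repackaging of the convolution structure of $\Xs$ together with the uniform-density assumption on plant locations.
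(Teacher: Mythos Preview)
Your proof is correct and follows essentially the same route as the paper's: convolution for $\fs$ via independence, substitution into \eqref{eq:pdef}, Fubini/Tonelli to swap the integrals, and use of the explicit form of $\fp$ to reduce to $H$ and extract $1/A$. The only cosmetic difference is that the paper substitutes $\fp=1/A$ on $H$ before inserting $\fs$ into \eqref{eq:pdef}, whereas you do it after the swap; your added justification via Tonelli is a welcome touch of rigor.
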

\begin{proof}
Because $\Xp$ and $\Xd$ are independent, the density of their sum $\Xs$ is the convolution of their densities:
\begin{equation}\label{eq:fs0}    \fs(\x)=\iint\limits_{\xp\in\mathbb{R}^2}\fp(\xp)\fd(\x-\xp) \, dA.
\end{equation}
Replacing $\fp$ in (\ref{eq:fs0}) with its definition from (\ref{eq:fp}) yields
\begin{equation}\label{eq:fs1}
    \fs(\x)=\frac{1}{A}\iint\limits_{\xp\in H}\fd(\x-\xp) \, dA.
\end{equation}
By substituting $\fs$ from (\ref{eq:fs1}) into the definition (\ref{eq:pdef}) of $p$ and reversing the order of integration, one obtains the expression \eqref{eq:pHfd}. 
\end{proof}

The representation of $p$ in  Proposition \ref{prop:pHfd} can be interpreted as seed loss probability for an individual plant at $\xp$ (the inner integral), averaged uniformly over the habitat $H$ (the outer integral). This form is useful for calculating $p$ numerically from a known habitat $H$ and dispersal kernel $\fd$, as done by \cite{cooney2022effect}.

\section{Seed loss as the flux of a vector field across habitat boundary}\label{sec:vf}

To bound $p$ in terms of habitat perimeter $L$, we  transform the inner integral over $H^c$ in equation \eqref{eq:pHfd} into a line integral around the habitat boundary $\partial H$. Specifically, we  express seed loss as a flux of seeds crossing the habitat boundary. To do so, we must describe seed flow in the language of vector fields. Subsection \ref{sec:vfsingle} constructs a vector field describing dispersal from a single plant and subsection \ref{sec:vftotal} integrates over all plant locations to obtain a vector field representing total seed dispersal.

\subsection{Vector field of a single plant}\label{sec:vfsingle}

\begin{figure}[b!] \label{fig:F}
    \centering
    \includegraphics[width=0.3\textwidth]{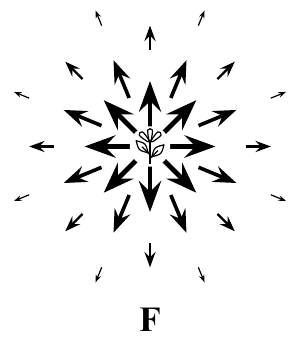}
    \vspace{1em}
    \caption{
    The vector field $\F$ represents the spatial flow of seeds away from a source plant 
    }
\end{figure}

The dispersal of seeds from a single plant is typically represented as a probability density function $f_d$, but the same information can also be represented as a vector field, which we call $\F$. The vector field $\F$ describes the motion of seeds as they disperse---like a flowing fluid, whereas $f_d$ describes their final resting positions.

The mathematical link between $\F$ and $f_d$ is the divergence operator. Intuitively, a positive divergence of $\F$ indicates a source (seeds being released) and a negative divergence indicates a sink (seeds landing). Our definition of $\F$ below yields a positive divergence singularity at the origin, representing the plant as a point source of seeds, and yields $\nabla\cdot\F = -f_d$ everywhere else, encoding seed landing positions. For more physical interpretation and motivation of $\F$'s definition, see Appendix \ref{secAF}.

\begin{definition}\label{def:F} Given a dispersal kernel $\fd$, the corresponding vector field $\F$ is
\begin{equation}\label{eq:F}
    \F(\x)=\frac{\x}{r_x^2}\int_{r_x}^\infty \fd(r,\theta_x) \, r \, dr.
\end{equation}
\end{definition}

\begin{lemma}\label{lem:Ffd}
For all non-zero $\x\in\mathbb{R}^2$,
\begin{equation}
    \nabla\cdot\F(\x)=-\fd(\x).
\end{equation}
\end{lemma}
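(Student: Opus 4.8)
The plan is to compute $\nabla\cdot\F$ directly, taking advantage of the fact that $\F$ is a purely radial field. Writing $\x$ in polar coordinates as $\x = r_x\,\hat{\bm r}$ with $\hat{\bm r}$ the outward unit radial vector, Definition \ref{def:F} becomes
\begin{equation}
\F(\x) = \frac{g(r_x,\theta_x)}{r_x}\,\hat{\bm r}, \qquad g(r,\theta) := \int_r^\infty \fd(s,\theta)\,s\,ds,
\end{equation}
since $\x/r_x^2 = \hat{\bm r}/r_x$. Thus $\F$ has zero angular component and radial component $F_r = g(r_x,\theta_x)/r_x$.

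First I would invoke the polar-coordinate expression for divergence, $\nabla\cdot\F = \tfrac{1}{r}\partial_r(rF_r) + \tfrac{1}{r}\partial_\theta F_\theta$. Since $F_\theta\equiv 0$ and $rF_r = g(r,\theta)$, this collapses to $\nabla\cdot\F(\x) = \tfrac{1}{r_x}\,\partial_r g(r,\theta_x)\big|_{r=r_x}$. Next I would differentiate $g$ via the fundamental theorem of calculus: holding $\theta$ fixed, $\partial_r\int_r^\infty \fd(s,\theta)\,s\,ds = -\fd(r,\theta)\,r$. Substituting back gives $\nabla\cdot\F(\x) = \tfrac{1}{r_x}\bigl(-\fd(r_x,\theta_x)\,r_x\bigr) = -\fd(\x)$, as claimed. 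An equivalent route that avoids the polar divergence formula is to keep $\F = g(r_x,\theta_x)\cdot\x/|\x|^2$ in Cartesian form and apply the product rule $\nabla\cdot(\phi\bm V) = \nabla\phi\cdot\bm V + \phi\,\nabla\cdot\bm V$, using that $\nabla\cdot(\x/|\x|^2)=0$ for $\x\neq 0$ and that $\nabla g\cdot\x = r_x\,\partial_r g$; this yields the same cancellation.

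I expect the bookkeeping in either version to be routine. The only genuine subtlety is the regularity needed to differentiate through the integral defining $g$: one wants $\partial_r g(r,\theta) = -r\,\fd(r,\theta)$ to hold pointwise, which follows from continuity of $\fd$ (or, more generally, local integrability together with a Lebesgue-point argument), with the decay hypothesis on $\fd$ assumed earlier guaranteeing that the improper integral in $g$ converges so that $\F$ and its derivatives make sense at every $\x\neq 0$. Apart from that, both the polar divergence identity and the Cartesian chain-rule computation are standard, so the proof should be short once the differentiation step is justified.
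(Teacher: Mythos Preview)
Your proposal is correct and follows essentially the same route as the paper: write $\F$ as a purely radial field, apply the polar divergence formula so that only $\tfrac{1}{r}\partial_r(rF_r)$ survives, and differentiate the integral $\int_{r_x}^\infty \fd(r,\theta_x)\,r\,dr$ via the fundamental theorem of calculus. Your auxiliary function $g$ and the remarks on regularity and the Cartesian alternative are additions beyond what the paper records, but the core argument is identical.
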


\begin{proof}
    We work in polar coordinates with unit vectors $\bm{\hat r}=\x/|\x|$ and $\bm{\hat\theta}$ the rotation of $\bm{\hat r}$ by $\pi/2$.
    In these coordinates, $\F(\x)=|\F(\x)|\bm{\hat r}+0\,\bm{\hat \theta}$ and by the polar divergence formula
    \begin{align}
        \nabla\cdot\F&=\frac{1}{r_x}\frac{\partial}{\partial r_x}\Big[r_x |\F| \Big]+\frac{1}{r_x}\frac{\partial}{\partial\theta}\Big[0\Big]\\
        &=\frac{1}{r_x}\frac{\partial}{\partial r_x}\Big[\int_{r_x}^\infty \fd(r,\theta_x) \, r \, dr\Big]\\
        &=\frac{1}{r_x}\big(-\fd(r_x,\theta_x)r_x\big)\\
        &=-\fd(r_x,\theta_x).
    \end{align}
\end{proof}

Using Definition \ref{def:F} and Lemma \ref{lem:Ffd}, one can convert from a dispersal kernel $\fd$ to a vector field $\F$ and back again. The two objects carry the same information about seed dispersal, simply in different forms. 

The remainder of Section \ref{sec:vfsingle} concerns the flux of $\F$ across the habitat boundary, which represents a single plant's probability of seed loss. 

\begin{lemma}\label{lem:Fcircle}
    Let $D$ be a disk shaped habitat with a single plant at its center, the origin. Then the flux of $\F$ across the habitat boundary $\partial D$ is equal to the probability that the plant's seeds are lost (land outside $D$). That is,
    \begin{equation}
        \oint\limits_{\x \in \partial D} \F(\x)\cdot\n \, ds = \iint\limits_{\x \in D^c} \fd(\x) \, dA
    \end{equation}
    where $\n$ is the outward pointing unit vector normal to the boundary.
\end{lemma}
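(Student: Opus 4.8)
The plan is a direct computation in polar coordinates, exploiting the fact that $\F$ is purely radial and that the boundary of a centered disk is a level set of $r_x$. Write $D$ as the disk of radius $R$ about the origin, so $\partial D = \{\x : |\x| = R\}$ and the outward unit normal at $\x\in\partial D$ is $\n = \x/R = \bm{\hat r}$. First I would evaluate the integrand $\F\cdot\n$ on $\partial D$: since $r_x = R$ there, Definition \ref{def:F} gives $\F(\x) = \frac{\x}{R^2}\int_R^\infty \fd(r,\theta_x)\,r\,dr$, and hence $\F(\x)\cdot\n = \frac{\x\cdot\x}{R^3}\int_R^\infty \fd(r,\theta_x)\,r\,dr = \frac{1}{R}\int_R^\infty \fd(r,\theta_x)\,r\,dr$, using $|\x|^2 = R^2$. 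This integral is finite for a.e.\ $\theta_x$ because its $\theta_x$-average is at most $1$, so the flux integral makes sense.

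Next I would parametrize $\partial D$ by the polar angle $\theta\in[0,2\pi)$, for which $ds = R\,d\theta$ and $\theta_x = \theta$. Substituting, the two factors of $R$ cancel and the line integral collapses to $\oint_{\x\in\partial D}\F(\x)\cdot\n\,ds = \int_0^{2\pi}\int_R^\infty \fd(r,\theta)\,r\,dr\,d\theta$. Finally I would recognize the right-hand side of the claimed identity in exactly this form: in polar coordinates $D^c = \{r > R\}$ and $dA = r\,dr\,d\theta$, so $\iint_{\x\in D^c}\fd(\x)\,dA = \int_0^{2\pi}\int_R^\infty \fd(r,\theta)\,r\,dr\,d\theta$, which matches, completing the proof.

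There is no real obstacle here; the only care required is bookkeeping — correctly identifying $\n$ and $ds$ on the circle, and using that on $\partial D$ the radius $r_x = R$ is constant, so the lower limit of the inner integral in the definition of $\F$ is fixed rather than varying over the contour. An alternative route, more in the flux spirit used later in the paper, is to apply the divergence theorem on the annulus $\{R < |\x| < \rho\}$ (where $\F$ is smooth, the origin being excluded), use $\nabla\cdot\F = -\fd$ from Lemma \ref{lem:Ffd}, and let $\rho\to\infty$; here one must additionally check that the flux through the large circle, which equals $\int_0^{2\pi}\int_\rho^\infty \fd(r,\theta)\,r\,dr\,d\theta = P(|\Xd| > \rho)$, tends to $0$, a consequence of the decay of $\fd$ assumed to make $\mu$ finite. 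I would present the direct computation as the main argument since it is shortest and self-contained.
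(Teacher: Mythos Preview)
Your proposal is correct and matches the paper's proof essentially step for step: evaluate $\F\cdot\n$ on $\partial D$ using Definition~\ref{def:F} with $r_x=R$, parametrize by the polar angle with $ds=R\,d\theta$, cancel the $R$'s, and recognize the resulting iterated integral as $\iint_{D^c}\fd\,dA$ in polar coordinates. Your annulus/divergence-theorem alternative is a nice observation but is not needed here, and the paper does not use it for this lemma.
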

\begin{proof}
    Suppose the disk has radius $R$. Then applying Definition \ref{def:F} and simplifying yields
    \begin{align}
        \oint\limits_{\x \in \partial D}\F(\x)\cdot\n \, ds&=\oint\limits_{\x \in \partial D}  
        \frac{1}{R}\int_{R}^\infty \fd(r,\theta_x) \, r \, dr \, ds.
    \end{align}
    Replacing $ds$ with $R \, d\theta_x$ and parameterizing the loop integral by $\theta_x$ gives
    \begin{align}
        \oint\limits_{\x \in \partial D}\F(\x)\cdot\n \, ds&=\int_{0}^{2\pi}  
        \frac{1}{R}\int_{R}^\infty \fd(r,\theta_x) \, r \, dr \, R \, d\theta_x\\
        &=\int_{0}^{2\pi}  
        \int_{R}^\infty \fd(r,\theta_x) \, r \, dr \, d\theta_x\\
        &=\iint\limits_{\x \in D^c} \fd(\x) \, dA.
    \end{align}
\end{proof}

In Proposition \ref{prop:div}, we use the divergence theorem to generalize Lemma \ref{lem:Fcircle} from a disk $D$ to a generic habitat shape $H$ and allow the plant to reside at any point $\xp$ in $H$, not just the origin.

\begin{proposition}\label{prop:div} 
For all $\xp\in H$,
\begin{equation}\label{eq:div}
\iint\limits_{\x \in H^c} \fd(\x - \xp) \, dA=\oint\limits_{\x \in \partial H} \F(\x - \xp) \cdot \n \, ds.
\end{equation}
\end{proposition}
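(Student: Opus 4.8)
The plan is to invoke the divergence theorem to turn the area integral of $\fd(\x-\xp)$ over $H^c$ into a boundary flux of $\F(\x-\xp)$, exploiting the identity $\nabla\cdot\F=-\fd$ from Lemma \ref{lem:Ffd}. The catch is that $\F(\x-\xp)$ is singular at $\x=\xp$, which is an interior point of $H$, so the divergence theorem cannot be applied on $H$ directly; I would excise a small disk around $\xp$ and carefully account for the flux it contributes. As a first reduction, translating coordinates by $\y=\x-\xp$ carries $H^c$ to $(H-\xp)^c$ and $\partial H$ to $\partial(H-\xp)$ while leaving arc length $ds$ and the outward unit normal $\n$ unchanged, so it suffices to prove
\[
\iint_{\x\in H^c}\fd(\x)\,dA=\oint_{\x\in\partial H}\F(\x)\cdot\n\,ds
\]
for a habitat $H$ containing the origin. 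Since $H$ is open, fix $\epsilon>0$ small enough that the closed disk $\overline{D_\epsilon}$ of radius $\epsilon$ about the origin lies in $H$.

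Next I would apply the divergence theorem on the punctured region $H\setminus\overline{D_\epsilon}$, on which $\F$ is smooth. Its boundary consists of the outer curve $\partial H$, with outward normal $\n$, and the inner circle $\partial D_\epsilon$, whose outward normal relative to $H\setminus\overline{D_\epsilon}$ points toward the origin, i.e.\ equals $-\bm{\hat r}$. Together with Lemma \ref{lem:Ffd} this yields
\[
-\iint_{H\setminus\overline{D_\epsilon}}\fd\,dA=\oint_{\partial H}\F\cdot\n\,ds-\oint_{\partial D_\epsilon}\F\cdot\bm{\hat r}\,ds.
\]
The inner flux integral is exactly the quantity evaluated in Lemma \ref{lem:Fcircle} applied to $D_\epsilon$ (a disk-shaped habitat with its plant at the center), so $\oint_{\partial D_\epsilon}\F\cdot\bm{\hat r}\,ds=\iint_{D_\epsilon^c}\fd\,dA=1-\iint_{D_\epsilon}\fd\,dA$, using that $\fd$ integrates to $1$ over $\mathbb{R}^2$.

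To finish, I would use additivity of the integrals: because $D_\epsilon\subset H$ and $\partial D_\epsilon$ has zero area, $\iint_{H\setminus\overline{D_\epsilon}}\fd\,dA=\iint_H\fd\,dA-\iint_{D_\epsilon}\fd\,dA$. Substituting this and the expression for the inner flux into the previous display, the two $\iint_{D_\epsilon}\fd\,dA$ terms cancel exactly, leaving $\oint_{\partial H}\F\cdot\n\,ds=1-\iint_H\fd\,dA=\iint_{H^c}\fd\,dA$, which is the claim. (Alternatively one could let $\epsilon\to0$ and observe that both disk integrals vanish.) I expect the only real obstacle to be the bookkeeping around the excised singularity — orienting the inner normal correctly and confirming that the divergence theorem genuinely applies to the punctured region, which is where the hypotheses that $\partial H$ is rectifiable (ensured by the finiteness of the perimeter $L$) and that $H$ is bounded are needed.
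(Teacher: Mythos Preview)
Your proposal is correct and follows essentially the same route as the paper: excise a small disk about the singular point $\xp$, apply the divergence theorem on the punctured region together with Lemma~\ref{lem:Ffd}, evaluate the inner-circle flux via Lemma~\ref{lem:Fcircle}, and then combine. The only cosmetic differences are your preliminary translation to the origin and your use of $\int_{\mathbb{R}^2}\fd=1$ in the final bookkeeping, whereas the paper instead observes directly that $B_\epsilon(\xp)^c\setminus H_\circ=H^c$.
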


\begin{proof}
We begin with the right hand side and derive the left. To apply the divergence theorem, care is required around the singularity of $\F(\x-\xp)$ when $\x=\xp$. Fix $\xp\in H$ and choose $\epsilon>0$ such that the ball $B_\epsilon(\xp)$ of radius $\epsilon$ around $\xp$ lies entirely within $H$. Let $H_\circ=H\backslash B_\epsilon(\xp)$.
Note that the boundary of $\Ho$ consists of the boundary of $H$ and the boundary of $B_\epsilon(\xp)$. Let $\n$ represent the unit normal to each boundary, with orientations shown in Figure \ref{fig:orientation}.  We have that
\begin{equation}\label{eq:boundaries}
    \oint\limits_{\x\in\partial H}\F(\x-\xp)\cdot\n \, ds
    =\underbrace{\oint\limits_{\x\in\partial\Ho}\F(\x-\xp)\cdot\n \ ds }_{\text{I}}
    -\underbrace{\oint\limits_{\underset{\large\partial B_\epsilon(\xp)}{\x\in}} \F(\x-\xp)\cdot\n \, ds}_{\text{II}}.
\end{equation}

\begin{figure}[h]
    \centering
    \includegraphics[width=0.25\textwidth]{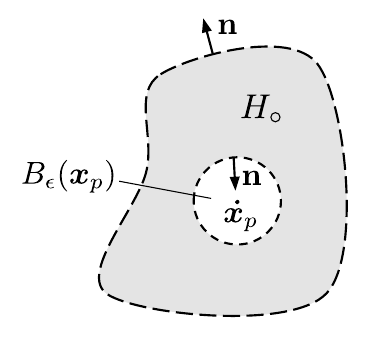}  
    \vspace{1em}
    \caption{
    Orientation of normal vectors used in the proof of Proposition \ref{prop:div}
    }
    \label{fig:orientation}
\end{figure}

\noindent Because $\Ho$ excludes the singularity at $\xp$, we may apply the divergence theorem to integral I of (\ref{eq:boundaries}). This gives
\begin{equation}\label{eq:intI}
\begin{aligned}
   \oint\limits_{\x\in\partial\Ho}\F(\x-\xp)\cdot\n \ ds&= \iint\limits_{\x\in\Ho} \nabla\cdot\F(\x-\xp) \, dA\\ &=   -\iint\limits_{\x\in\Ho} \fd(\x-\xp) \, dA,
\end{aligned}
\end{equation}
where the second equality follows from Lemma \ref{lem:Ffd}.
Integral II is handled by Lemma \ref{lem:Fcircle}, 
since $B_\epsilon(\xp)$ is a disk:
\begin{equation}\label{eq:intII}
    \oint\limits_{\x\in \partial B_\epsilon(\xp)} \F(\x-\xp)\cdot\n \, ds =
    -\iint\limits_{\x\in B_\epsilon(\xp)^c} \fd(\x-\xp) \, dA.
\end{equation}
Note the translation from $\x$ in Lemma \ref{lem:Fcircle} to $\x - \xp$ in equation \eqref{eq:intII}, along with the re-orientation of the unit normal from outward to inward, resulting in negation.


Substituting (\ref{eq:intI}) and (\ref{eq:intII}) for integrals I and II in (\ref{eq:boundaries}) gives
\begin{equation}\label{eq:two_ints}
    \oint\limits_{\x\in\partial H}\F(\x-\xp)\cdot\n \, ds
    = -\iint\limits_{\x\in\Ho} \fd(\x-\xp) \, dA+\iint\limits_{\x\in B_\epsilon(\xp)^c} \fd(\x-\xp) \, dA. 
\end{equation}
Because $\Ho\subset B_\epsilon(\xp)^c$, the difference of integrals in (\ref{eq:two_ints}) is the integral over their set difference $B_\epsilon(\xp)^c\backslash\Ho$, which is $H^c$.
This yields our desired result
\[
\oint\limits_{\x \in \partial H} \F(\x - \xp) \cdot \n \, ds=\iint\limits_{\x \in H^c} \fd(\x - \xp) \, dA.
\]
\end{proof}

\subsection{Integrating seed flux from all plant locations}\label{sec:vftotal}
Next we develop a second vector field $\G$ that aggregates seed movement from all plants in the habitat.
Using Proposition \ref{prop:div}, we can rewrite the inner integral in the expression \eqref{eq:pHfd} for seed loss probability $p$ in terms of seed flux from a single source at $\xp$ across the habitat boundary. This gives
\begin{equation}
    p=\frac{1}{A}\iint\limits_{\xp\in H} \ \oint\limits_{\x\in\partial H} \F(\x-\xp)\cdot \n \, ds \, dA.
\end{equation}
Reversing the order of integration and moving $\n$ out of the inner integral yields

\begin{align}
    p
    &=\frac{1}{A}\oint\limits_{\x\in\partial H} \ \iint\limits_{\xp\in H}  \F(\x-\xp) \, dA \, \cdot \n \, ds. \label{eq:G0}
\end{align}
The inner integral in \eqref{eq:G0} now represents the effective flow of seeds contributed from all plants in the habitat at a point $\x$ on the habitat boundary. We call this the total dispersal field and denote it
\begin{equation}\label{eq:G}
\G(\x)\equiv\displaystyle\iint\limits_{\xp\in H} \F(\x - \xp) \, dA,
\end{equation}
so that
\begin{equation}\label{eq:pG}
    p=\frac{1}{A}\oint\limits_{\x\in\partial H} \G(\x)\cdot \n \, ds.
\end{equation}

\begin{figure}[t!]
    \centering
    \includegraphics[width=0.4\textwidth]{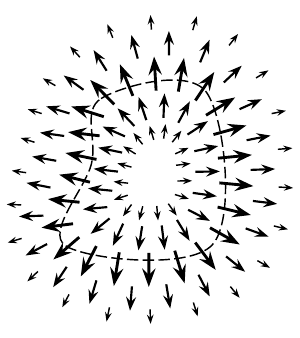}  
    \vspace{1em}
    \caption{
    The total dispersal field $\G$ models the net flow of all seeds originating from the plant population in $H$ (eq. \eqref{eq:G}). 
    The outward flux of $\G$ across the habitat boundary, depicted here as a dotted line, represents seed loss (eq. \eqref{eq:pG})
    }
    \label{fig:G}
\end{figure}

\section{Bounding seed loss}\label{sec:bound}

Based on equation \eqref{eq:pG}, in order to bound $p$ it suffices to bound the magnitude of $\G\cdot\n$. We derive two upper bounds on $\G\cdot\n$: one with no assumptions on the symmetry of the dispersal kernel $\fd$ (Lemma \ref{lem:Gmax_asym}) and another when $\fd$ is rotationally symmetric (Lemma \ref{lem:Gmax_symm}). 

\begin{lemma}
\label{lem:Gmax_asym}
For all $\x\in\mathbb{R}^2$ and for all unit vectors $\n\in\mathbb{R}^2$,
\[
\G(\x)\cdot\n \leq \frac{\n}{2}\cdot E(\Xd)+\frac{\mu}{2}.
\]
\end{lemma}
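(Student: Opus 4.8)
The plan is to bound $\G(\x)\cdot\n$ by discarding the portion of its defining integral with unfavorable sign, enlarging what remains to a half-plane, and then reducing to a one-dimensional comparison of angular moments of $\fd$.

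First I would substitute $\bm v=\x-\xp$ in the definition \eqref{eq:G} of $\G$. Since the affine map $\xp\mapsto\x-\xp$ has unit Jacobian and $\xp$ ranges over $H$, this gives
\[
\G(\x)\cdot\n=\iint\limits_{\bm v\,\in\,\x-H}\F(\bm v)\cdot\n\;dA ,
\]
where $\x-H=\{\x-\xp:\xp\in H\}$. By Definition \ref{def:F}, $\F(\bm v)\cdot\n=\dfrac{\bm v\cdot\n}{|\bm v|^{2}}\,g(\bm v)$ with $g(\bm v)=\int_{|\bm v|}^{\infty}\fd(r,\theta_v)\,r\,dr\ge 0$; moreover a short computation gives $\iint_{\mathbb R^{2}}|\F|\,dA=\mu<\infty$, so every integral below is absolutely convergent. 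The integrand $\F(\bm v)\cdot\n$ is $\le 0$ on the half-plane $\{\bm v\cdot\n<0\}$ and $\ge 0$ on $\{\bm v\cdot\n>0\}$, so discarding the part of $\x-H$ where $\bm v\cdot\n<0$ only increases the integral, and then enlarging the remaining domain to the full half-plane $\{\bm v\cdot\n>0\}$ only increases it again. Hence
\[
\G(\x)\cdot\n\;\le\;\iint\limits_{\bm v\cdot\n>0}\F(\bm v)\cdot\n\;dA ,
\]
and the right-hand side no longer depends on $H$ or on $\x$.

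Next I would evaluate this half-plane integral in polar coordinates $(r,\theta)$ with the angle $\theta$ measured from the direction $\n$, so that $\bm v\cdot\n=r\cos\theta$ and $\{\bm v\cdot\n>0\}$ corresponds to $\theta\in(-\pi/2,\pi/2)$. Substituting $\F$ from Definition \ref{def:F} and applying Tonelli's theorem to interchange the $r$-integral with the inner radial integral collapses the double radial integral, since $\int_0^{\infty}\!\int_r^{\infty}\fd(\rho,\theta)\,\rho\,d\rho\,dr=\int_0^{\infty}\fd(\rho,\theta)\,\rho^{2}\,d\rho=:m(\theta)\ge 0$; this yields $\iint_{\bm v\cdot\n>0}\F(\bm v)\cdot\n\,dA=\int_{-\pi/2}^{\pi/2}\cos\theta\,m(\theta)\,d\theta$. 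In the same coordinates $\n\cdot E(\Xd)=\int_0^{2\pi}\cos\theta\,m(\theta)\,d\theta$, and by \eqref{eq:mu} we have $\mu=\int_0^{2\pi}m(\theta)\,d\theta$, so the asserted inequality is equivalent to
\[
\int_0^{2\pi}\Big(\mathbf{1}_{\{\cos\theta>0\}}\cos\theta-\tfrac{1+\cos\theta}{2}\Big)\,m(\theta)\,d\theta\le 0 .
\]
Since $m(\theta)\ge 0$, it suffices that the parenthesized factor be $\le 0$ for every $\theta$: when $\cos\theta\le 0$ it equals $-\tfrac{1+\cos\theta}{2}\le 0$ because $\cos\theta\ge-1$, and when $\cos\theta>0$ it equals $\tfrac{\cos\theta-1}{2}\le 0$ because $\cos\theta\le 1$.

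The one place where genuine care is needed is the first step — discarding and then enlarging the domain of integration — where one must pin down the sign of $\F(\bm v)\cdot\n$ on each half-plane and confirm (via $\iint_{\mathbb R^{2}}|\F|\,dA=\mu$) that the resulting half-plane integral is finite, so that the domain manipulations are legitimate and no $\infty-\infty$ arises. Everything after that is a Tonelli interchange, identification of the angular moments $m(\theta)$, and a two-line trigonometric inequality. (I also note that the $\n\cdot E(\Xd)$ term, which may be negative, is retained precisely because it integrates to zero against the outward normal over $\partial H$, so that the flux bound \eqref{eq:pG} will collapse to $p\le\mu L/(2A)$ in \eqref{eq:main1}.)
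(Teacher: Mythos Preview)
Your proof is correct and follows essentially the same approach as the paper's: substitute $\bm v=\x-\xp$, use the sign of $\F(\bm v)\cdot\n$ to discard the unfavorable part and enlarge the domain, pass to polar coordinates with angle measured from $\n$, swap the two radial integrals via Tonelli, and reduce to an elementary trigonometric inequality. The only cosmetic difference is that the paper replaces $\bm v\cdot\n$ by $\tfrac{\bm v\cdot\n+|\bm v\cdot\n|}{2}$ and then extends to all of $\mathbb{R}^2$, whereas you restrict to the half-plane first and compare angular moments at the end; your pointwise bound $\mathbf{1}_{\{\cos\theta>0\}}\cos\theta\le\tfrac{1+\cos\theta}{2}$ is exactly the paper's bound $|\hat{\bm y}\cdot\n|\le|\hat{\bm y}|$ written in angular form.
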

\begin{proof}
    It follows from the definition (\ref{eq:G}) of $\G$ that
\[
\G(\x)\cdot\n = \iint\limits_{\xp\in H} \F(\x-\xp)\cdot\n \, dA.
\]
To simplify the following argument, we define $\y=\x-\xp$ as a shorthand for the seed dispersal vector. Additionally, let $\hat{H}=\{\y\in\mathbb{R}^2 \, | \, \x-\y\in H\}$ so that 
\[
\G(\x)\cdot\n = \iint\limits_{\y\in\hat{H}} \F(\y)\cdot\n \, dA.
\]
Expanding $\F$ by Definition \ref{def:F}, we have
\begin{equation}
    \G(\x)\cdot\n = \iint\limits_{\y\in\hat{H}} \frac{\y\cdot\n}{r_y^2}\int_{r_y}^\infty \fd(r,\theta_y) \, r \, dr \, dA
\end{equation}
where $\theta_y, r_y$ are the polar coordinates of $\y$. Next, we use the fact that $\y\cdot\n \leq \frac{\y\cdot\n}{2} + \frac{|\y\cdot\n|}{2}$ to produce the inequality
\begin{equation}\label{eq:abs_val_trick}
    \G(\x)\cdot\n \leq \iint\limits_{\y\in\hat{H}} \left(\frac{\y\cdot\n}{2} + \frac{|\y\cdot\n|}{2}\right) \frac{1}{r_y^2}\int_{r_y}^\infty \fd(r,\theta_y) \, r \, dr \, dA.
\end{equation}
Substituting $\frac{\y\cdot\n}{2} + \frac{|\y\cdot\n|}{2}$ for $\y\cdot\n$ has no effect when $\y\cdot\n>0$, that is, when the seed dispersal vector points outward across the habitat boundary. But when $\y\cdot\n<0$, it is replaced by zero. Thus, we ignore the contribution of any seeds flowing inward across the habitat boundary. Although at first it appears to introduce needless complication, this substitution enables the rest of the proof.

Because $\frac{\y\cdot\n}{2} + \frac{|\y\cdot\n|}{2}$ is never negative, we are now free to expand the region of integration from $\y\in\hat{H}$ to all $\y\in\mathbb{R}^2$, which we parameterize by $\theta_y, r_y$, with $\theta_y$ measured relative to $\n$ so that $\y\cdot\n=r_y\cos\theta_y$. With this, \eqref{eq:abs_val_trick} becomes
\begin{align}
    \G(\x)\cdot\n
    &\leq\int_0^{2\pi}\int_0^\infty \left(\frac{r_y\cos\theta_y}{2}+\frac{|r_y\cos\theta_y|}{2}\right)\frac{1}{r_y^2}\int_{r_y}^\infty \fd(r,\theta_y) \, r \, dr \, r_y \, dr_y \, d\theta_y\\
    &=\int_0^{2\pi}\int_0^\infty \int_{r_y}^\infty \left(\frac{\cos\theta_y}{2}+\frac{|\cos\theta_y|}{2}\right) \fd(r,\theta_y) \, r \, dr \, dr_y \, d\theta_y.
\end{align}

We interchange the order of integration between $dr$ and $dr_y$ to obtain
\begin{equation}
   \G(\x)\cdot\n\leq \int_0^{2\pi} \int_0^\infty \int_0^r \left(\frac{\cos\theta_y}{2}+\frac{|\cos\theta_y|}{2}\right) \fd(r,\theta_y) \, r \, dr_y \, dr \, d\theta_y
\end{equation}
and evaluate the innermost integral, yielding an additional factor of $r$:
\begin{equation} 
   \G(\x)\cdot\n\leq\int_0^{2\pi}\ \int_0^\infty \left(\frac{r\cos\theta_y}{2}+\frac{r|\cos\theta_y|}{2}\right) \fd(r,\theta_y) \, r \, dr \, d\theta_y. \label{eq:Gdotn_pickup}
\end{equation}
Let $\hat\y=\frac{r}{r_y}\y$, so that $\hat\y$ has the same direction as $\y$ but magnitude $r$ instead of $r_y$. Then $r\cos\theta_y=\hat\y\cdot\n$ and 
\begin{align}
    \G(\x)\cdot\n&\leq\int_0^{2\pi}\int_0^\infty\left(\frac{\hat\y\cdot\n}{2}+\frac{|\hat\y\cdot\n|}{2}\right)\fd(r,\theta_y) \, r \, dr \, d\theta_y\\
    &=\iint\limits_{\hat\y\in\mathbb{R}^2} \left(\frac{\hat\y\cdot\n}{2}+\frac{|\hat\y\cdot\n|}{2}\right)\fd(\hat\y) \, dA\\
    &\leq\frac{\n}{2}\cdot\iint\limits_{\hat\y\in\mathbb{R}^2}\hat\y \, \fd(\hat\y) \, dA + \frac{1}{2}\iint\limits_{\hat\y\in\mathbb{R}^2}\big|\hat\y\big|\fd(\hat\y) \, dA\\
    &=\frac{\n}{2}\cdot E(\Xd)+\frac{1}{2}\mu
\end{align}
as claimed.
\end{proof}

The bound in Lemma \ref{lem:Gmax_asym} remains valid when seed dispersal is asymmetric or biased in one direction, for example by prevailing winds. In contrast, Lemma \ref{lem:Gmax_symm} requires that dispersal patterns are identical in each direction, which yields a stronger bound.

\begin{lemma} \label{lem:Gmax_symm}
Suppose $\fd$ is rotationally symmetric; that is, there exists a function $\hat\fd$ such that $\fd(r,\theta) = \hat\fd(r)$ for all $\theta$. Then for all $\x\in\mathbb{R}^2$ and for all unit vectors $\n\in\mathbb{R}^2$,
\[
\G(\x)\cdot\n \leq \frac{\mu}{\pi}.
\]
\end{lemma}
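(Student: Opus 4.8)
The plan is to build directly on the proof of Lemma~\ref{lem:Gmax_asym}, which—one step before its final, deliberately lossy estimate—establishes that for every $\x\in\mathbb{R}^2$ and every unit vector $\n$,
\[
\G(\x)\cdot\n \;\leq\; \iint\limits_{\hat\y\in\mathbb{R}^2}\left(\frac{\hat\y\cdot\n}{2}+\frac{|\hat\y\cdot\n|}{2}\right)\fd(\hat\y)\,dA .
\]
I would start from this inequality and split the right-hand side as $\tfrac12\,\n\cdot E(\Xd)+\tfrac12\iint|\hat\y\cdot\n|\,\fd(\hat\y)\,dA$, where $E(\Xd)=\iint\hat\y\,\fd(\hat\y)\,dA$. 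Under rotational symmetry the first piece vanishes: the component of $E(\Xd)$ along any unit vector $\n$ is $\iint(\n\cdot\hat\y)\fd(\hat\y)\,dA$, which in polar coordinates factors as $\bigl(\int_0^{2\pi}\cos\theta\,d\theta\bigr)\bigl(\int_0^\infty r^2\hat\fd(r)\,dr\bigr)=0$, so $E(\Xd)=\bm{0}$.

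What remains is to evaluate $\tfrac12\iint|\hat\y\cdot\n|\,\fd(\hat\y)\,dA$. This is precisely the term that the asymmetric proof bounded crudely by $\tfrac12\iint|\hat\y|\,\fd(\hat\y)\,dA=\mu/2$; the gain in the symmetric case is that we may compute it exactly, picking up the factor $\tfrac1{2\pi}\int_0^{2\pi}|\cos\theta|\,d\theta=2/\pi$ in place of $1$. Concretely, parameterize $\hat\y$ by polar coordinates $(r,\theta)$ with $\theta$ measured from $\n$, so $\hat\y\cdot\n=r\cos\theta$ and $\fd(\hat\y)=\hat\fd(r)$, and factor the integral:
\[
\iint\limits_{\hat\y\in\mathbb{R}^2}|\hat\y\cdot\n|\,\fd(\hat\y)\,dA = \Bigl(\int_0^{2\pi}|\cos\theta|\,d\theta\Bigr)\Bigl(\int_0^\infty r^2\hat\fd(r)\,dr\Bigr) = 4\int_0^\infty r^2\hat\fd(r)\,dr .
\]
Finally I would identify $\int_0^\infty r^2\hat\fd(r)\,dr$ with $\mu/(2\pi)$ via the polar form of \eqref{eq:mu}, namely $\mu=\iint|\hat\y|\fd(\hat\y)\,dA=2\pi\int_0^\infty r^2\hat\fd(r)\,dr$, which gives $\iint|\hat\y\cdot\n|\,\fd(\hat\y)\,dA=2\mu/\pi$ and hence $\G(\x)\cdot\n\leq 0+\tfrac12\cdot\tfrac{2\mu}{\pi}=\mu/\pi$, as claimed.

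There is no serious obstacle once the reduction is in place; the two points requiring care are (i) invoking the correct intermediate line from the proof of Lemma~\ref{lem:Gmax_asym}—its stated conclusion has already discarded the directional information we want to exploit—and (ii) justifying the separation of the angular and radial integrations, which is immediate from Tonelli's theorem since the integrand is nonnegative. As a geometric sanity check, $\mu/\pi$ is exactly the outward flux of $\F$ obtained by integrating over the half-plane $\{\hat\y\cdot\n>0\}$: since a rotationally symmetric $\fd$ makes $\F$ radial and outward-pointing, $\F(\hat\y)\cdot\n$ has the sign of $\hat\y\cdot\n$, so discarding the part of $\hat H$ on the inward side and enlarging the rest to that half-plane would give the bound directly—but routing through Lemma~\ref{lem:Gmax_asym} avoids repeating the divergence-theorem bookkeeping.
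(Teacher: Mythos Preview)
Your proof is correct and essentially the same as the paper's: both pick up from the intermediate inequality in the proof of Lemma~\ref{lem:Gmax_asym}, exploit rotational symmetry to separate the angular and radial integrations, and identify $\int_0^\infty r^2\hat\fd(r)\,dr=\mu/(2\pi)$. The only cosmetic difference is that the paper starts one line earlier at equation~\eqref{eq:Gdotn_pickup} and factors the double integral directly, whereas you start one line later and first split off the $E(\Xd)$ term to observe it vanishes---but the arithmetic is identical.
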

\begin{proof}
Following the proof of Lemma \ref{lem:Gmax_asym} to equation \eqref{eq:Gdotn_pickup}, we have
\begin{equation*}
    \G(\x)\cdot\n \leq \int_0^{2\pi}\ \int_0^\infty \left(\frac{r\cos\theta_y}{2}+\frac{r|\cos\theta_y|}{2}\right) \fd(r,\theta_y) \, r \, dr \, d\theta_y.
\end{equation*}
But because $\fd(r,\theta) = \hat\fd(r)$, which depends only on $r$, this double integral can now be separated:
\begin{align}
    \G(\x)\cdot\n &\leq \int_0^{2\pi} \left(\frac{\cos\theta_y}{2}+\frac{|\cos\theta_y|}{2}\right) d\theta_y \int_0^\infty \hat\fd(r) \, r^2 \, dr\\
    &=
    2 \int_0^\infty \hat\fd(r) \, r^2 \, dr. \label{eq:int_fdhat}
\end{align}
When the definition of $\mu$ from equation \eqref{eq:mu} is re-written in polar coordinates,
\begin{equation} \label{eq:mu_polar}
    \mu = \int_0^{2\pi} \int_0^\infty r \hat\fd(r) \, r \, dr \, d\theta = 2\pi \int_0^\infty \hat\fd(r) \, r^2 \, dr,
\end{equation}
it becomes clear, upon comparison of equations \eqref{eq:int_fdhat} and \eqref{eq:mu_polar}, that
\begin{equation}
    \G(\x)\cdot\n \leq \frac{\mu}{\pi}.
\end{equation}
\end{proof}

\begin{theorem}\label{thm:main}
Suppose a plant population with mean seed dispersal distance $\mu$ is uniformly distributed over a habitat patch $H$ with finite area $A$ and perimeter $L$. If dispersal is independent of plant position, then the probability that a given seed disperses outside the habitat is bounded by 
\[
p \leq \frac{\mu L}{2 A}.
\]
If in addition dispersal is rotationally symmetric, then the stronger bound
\[
p \leq \frac{\mu L}{\pi A}
\]
holds.
\end{theorem}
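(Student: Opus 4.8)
The plan is to assemble the theorem directly from the flux representation of seed loss in equation \eqref{eq:pG} together with the pointwise bounds on $\G\cdot\n$ supplied by Lemmas \ref{lem:Gmax_asym} and \ref{lem:Gmax_symm}. Starting from
\[
p = \frac{1}{A}\oint\limits_{\x\in\partial H}\G(\x)\cdot\n\, ds,
\]
I would, in the general case, substitute the bound $\G(\x)\cdot\n\le \tfrac{1}{2}\n\cdot E(\Xd) + \tfrac{\mu}{2}$ from Lemma \ref{lem:Gmax_asym} into the integrand. Since this is an upper bound and $ds$ is a nonnegative measure, linearity of the line integral yields
\[
p \le \frac{1}{A}\left(\frac{1}{2}E(\Xd)\cdot\oint\limits_{\partial H}\n\, ds + \frac{\mu}{2}\oint\limits_{\partial H}ds\right).
\]

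The one nonroutine ingredient is the identity $\oint_{\partial H}\n\, ds = \bm{0}$, i.e.\ the integral of the outward unit normal around a closed curve vanishes. I would justify this by applying the divergence theorem to each of the constant vector fields $\bm{e}_1=(1,0)$ and $\bm{e}_2=(0,1)$, whose divergence is identically zero, so that $\oint_{\partial H}\bm{e}_i\cdot\n\, ds = \iint_H 0\, dA = 0$ for $i=1,2$; since both components of $\oint_{\partial H}\n\, ds$ vanish, so does the vector integral. With the first term eliminated and $\oint_{\partial H}ds = L$, the general estimate collapses to $p \le \mu L/(2A)$. For the rotationally symmetric case I would proceed identically but insert the stronger pointwise bound $\G(\x)\cdot\n\le \mu/\pi$ from Lemma \ref{lem:Gmax_symm}; here no cancellation is needed, and integrating the constant $\mu/\pi$ over a boundary of length $L$ immediately gives $p \le \mu L/(\pi A)$.

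I expect the main (mild) obstacle to be bookkeeping around the regularity of $\partial H$ rather than any substantive computation: one must make sure the divergence theorem and the boundary line integrals are legitimate for a habitat patch assumed only to be open with finite area and finite perimeter. I would address this by noting that the identity $\oint_{\partial H}\n\, ds=\bm{0}$ and the preceding propositions hold for piecewise-$C^1$ boundaries, which cover the habitat shapes of practical interest; the genuinely hard work has already been carried out in Sections \ref{sec:vf} and \ref{sec:bound}, so the proof of Theorem \ref{thm:main} itself is just this substitution-and-cancellation step.
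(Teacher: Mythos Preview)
Your proposal is correct and follows essentially the same route as the paper's own proof: start from the flux representation \eqref{eq:pG}, insert the pointwise bounds of Lemmas~\ref{lem:Gmax_asym} and~\ref{lem:Gmax_symm}, and in the asymmetric case use $\oint_{\partial H}\n\,ds=\bm{0}$ to kill the $E(\Xd)$ term. If anything you are slightly more explicit than the paper, which simply asserts that the integral of $\n$ around $\partial H$ vanishes without spelling out the divergence-theorem justification you give.
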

\begin{proof}
From equation \eqref{eq:pG} we have 
    \[
    p = \frac{1}{A} \oint\limits_{\x\in\partial H} \G(\x)\cdot\n(\x) \, ds,
    \]
where $\n(x)$ is the outward-oriented unit vector orthogonal to $\partial H$ at $\x$.
In the general case, Lemma \ref{lem:Gmax_asym} gives that
    \[
    \G(\x)\cdot\n(\x) \leq \frac{\n(\x)}{2}\cdot E(\Xd)+\frac{\mu}{2}, 
    \]
and so
\begin{align}
    p
    &\leq 
    \frac{1}{A} \oint\limits_{\x\in\partial H} \left( \frac{\n(\x)}{2}\cdot E(\Xd)+\frac{\mu}{2}\right) ds\\
    &=\frac{1}{A}\left(\frac{1}{2}E(\Xd) \, \cdot \oint\limits_{\x\in\partial H} \n(\x) \, ds + \oint\limits_{\x\in\partial H}\frac{\mu}{2} \, ds  \right).
\end{align}
The integral of $\n(\x)$ around the closed curve $\partial H$ vanishes, leaving 
\begin{equation}
    p
    \leq
    \frac{1}{A}
    \oint\limits_{\x\in\partial H}\frac{\mu}{2} \, ds
    =
    \frac{\mu L}{2A}
\end{equation}
as claimed.

In the special case that dispersal is rotationally symmetric, we have from Lemma \ref{lem:Gmax_symm} that 
    \[
    \G(\x)\cdot\n \leq \frac{\mu}{\pi} 
    \]
which immediately gives 
    \[
    p\leq \frac{1}{A} \oint\limits_{\x\in\partial H} \frac{\mu}{\pi} \, ds
    =\frac{\mu L}{\pi A}.
    \]
\end{proof}

In Appendix A we show that the constant factors $1/2$ and $1/\pi$ appearing in Theorem \ref{thm:main} are the best that can be achieved under the assumptions of this paper. 

\section{Example}\label{sec:examples}

To illustrate a practical application of Theorem \ref{thm:main}, we turn to McKnight Prairie, a 
14 hectare (34 acre) patch of remnant prairie in southern Minnesota. The site was mostly spared from agricultural use due to its hilly terrain, and is now protected under the Minnesota Department of Natural Resources (DNR) Native Prairie Bank Program.

Since the plot is embedded in a matrix of
commercial agriculture, any seeds that land outside its borders are effectively lost. Especially for a species like \textit{Ascelpias syriaca} (common milkweed), whose seeds are easily carried away on the wind, one might wonder if a large fraction of seeds are lost to the surrounding farmland, hampering the species' ability to reach available sites within the local prairie and potentially leading to population decline. In this type of scenario, Theorem \ref{thm:main} may 
provide a rough estimate and upper bounds on seed loss. We compute these bounds in Section \ref{sec:McK_bounds}, simulate seed loss via a full dispersal kernel model in Section \ref{sec:McK_sim}, and compare the bounds to simulations in Section \ref{subsec:compare}.

\subsection{Upper bounds on seed loss}\label{sec:McK_bounds}

To compute the bounds given by Theorem \ref{thm:main}, three inputs are required: habitat area $A$, habitat perimeter $L$, and mean dispersal distance $\mu$. We traced the border of McKnight Prairie from Google Maps and then used a simple Python program to estimate its area and perimeter: $A = 137,000$ m$^2$ (13.7 hectares) and $L = 1,930$ m. One could also make physical measurements or use GIS software.

Mean dispersal distance is somewhat more challenging to estimate. Section \ref{subsec:estimating_mu} discusses several practical routes to computing $\mu$, including the ``ballistic" model for wind-dispersed seeds,
\begin{equation}\label{eq:ballistic}
    \mu=\frac{v_wh}{v_T},
\end{equation}
which requires release height $h$, terminal velocity $v_T$, and wind speed $v_w$ \citep{nathan2012mechanistic}. For \textit{A. syriaca}, we take $h = 0.866$ m and $v_T = 0.219$ m/s from \cite{sullivan2018mechanistically}. We estimate wind speed using proxy data recorded by \cite{seeley2021hourly} at Cedar Creek Ecosystem Science Reserve, which is 100 km from McKnight Prairie. We estimate $v_w = 4.23$ m/s, which is the maximum hourly speed over September 17--23 (a seed-release period used by \cite{sullivan2018mechanistically}) averaged over all 19 years with complete wind data. Equation \eqref{eq:ballistic} then yields $\mu = 16.7$ m.

With these parameters ($A = 137,000$ m$^2$, $L = 1,930$ m, $\mu = 16.7$ m), Theorem \ref{thm:main} gives the following upper bounds on seed loss probability:
\begin{align*}
    p \leq \frac{\mu L}{\pi A} = 7.5\% && 
    p \leq \frac{\mu L}{2A} = 11.8\% \\
    \text{(Symmetric)}\quad && \text{(Asymmetric).}\quad
\end{align*}
These relatively low bounds are a result of the habitat's large size, on the order of 100's of meters, which gives a characteristic length scale $A/L$ that is much greater than $\mu$. Even in the asymmetric case, $p \leq 11.8\%$, meaning McKnight Prairie will retain almost 90\% of \textit{A. syriaca's} seeds. 
While additional factors such as mortality rates within the patch may interact with seed loss to determine the ultimate outcome, the species' high seed retention is encouraging.

\begin{figure}
    \centering
    \includegraphics[width=0.7\textwidth]{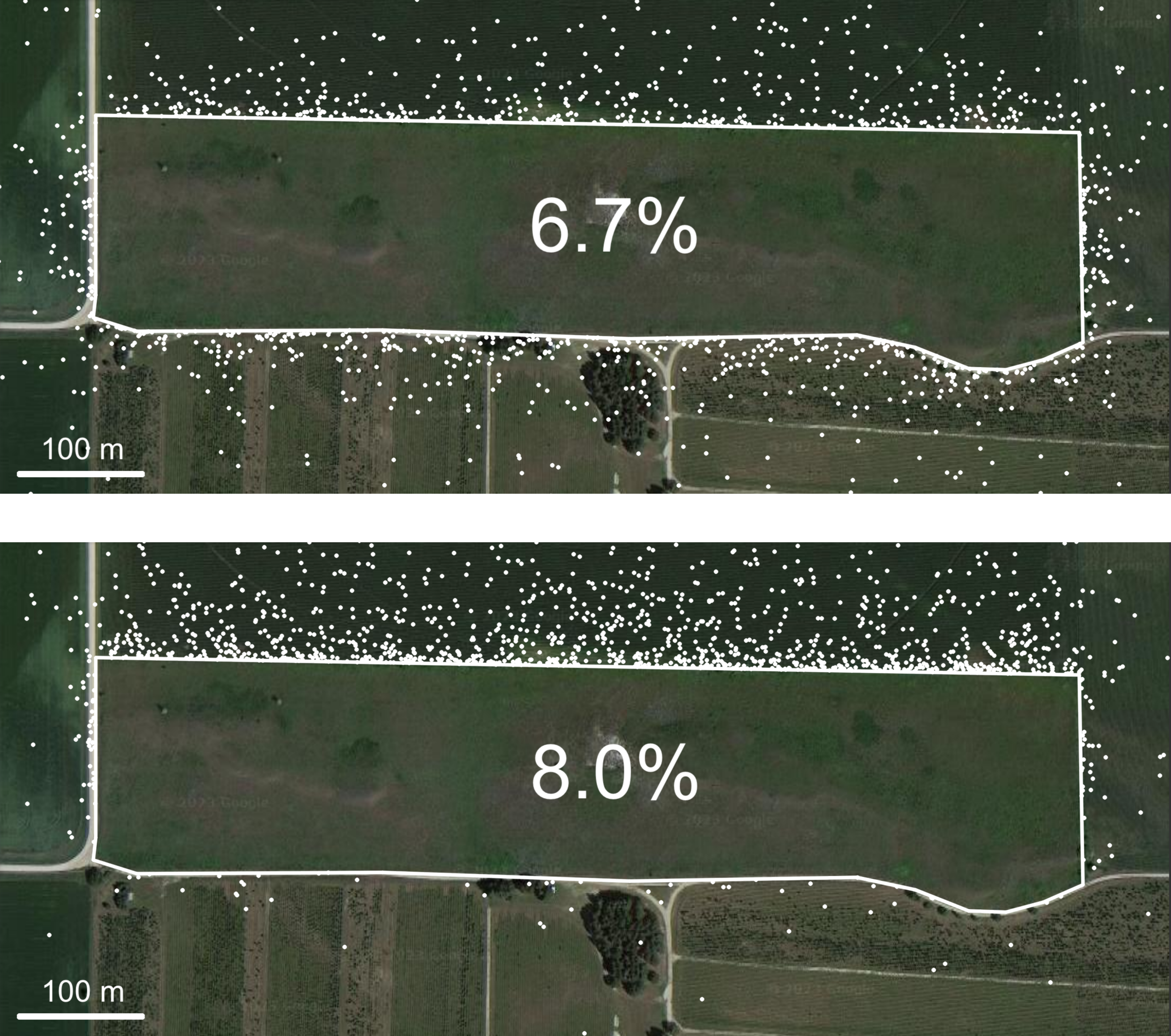}
    \caption{A random sample of several thousand \textit{A. syriaca} seeds lost from McKnight Prairie (outlined in white) to the surrounding agricultural fields. More seeds are lost when dispersal is biased northward (bottom panel) than when dispersal is symmetric (top panel) even though mean dispersal distance is the same in both cases. Aerial image: Google, \copyright2023 Airbus, CNES/Airbus, Maxar Technologies, USDA/FPAC/GEO}
    \label{fig:sim}
\end{figure}

\subsection{Simulation of seed loss}\label{sec:McK_sim}

In addition to using Theorem \ref{thm:main} to derive upper bounds on seed loss from \textit{A. syriaca} at McKnight prairie, we computed $p$ using a numerical simulation consistent with the probabilistic definition of $p$ in equation \ref{eq:pdef}. We implemented a simple Monte Carlo style simulation in which a plant is placed uniformly at random within the habitat and a seed disperses from there, landing either inside or outside the habitat. This process was repeated 10 million times, and the fraction of lost seeds was reported as $p$. We ran the simulation twice, once for symmetric dispersal and once for asymmetric dispersal. In both cases, dispersal distance was sampled from an inverse Gaussian (Wald) distribution with mean $\mu=16.7$ m (as computed above) and shape parameter $\lambda = 2.5$ m (as reported in \cite{sullivan2018mechanistically}). In the symmetric case, the direction of the seed dispersal vector was chosen uniformly at random, but in the asymmetric case, a northern bias was introduced. To simulate a hypothetical prevailing northern wind, the direction, $\theta$, measured relative to due north, was sampled according to the probability density function $(1 + 10\theta^2)^{-1}$ (normalized so that its integral from $-\pi$ to $\pi$ is 1). The results of the simulations were $p=6.7\%$ in the symmetric case and $p=8.0\%$ in the asymmetric case. As expected, these values are less than the theoretical bounds of 7.5\% and 11.8\%.

\begin{figure}
    \centering
    \includegraphics[width=\textwidth]{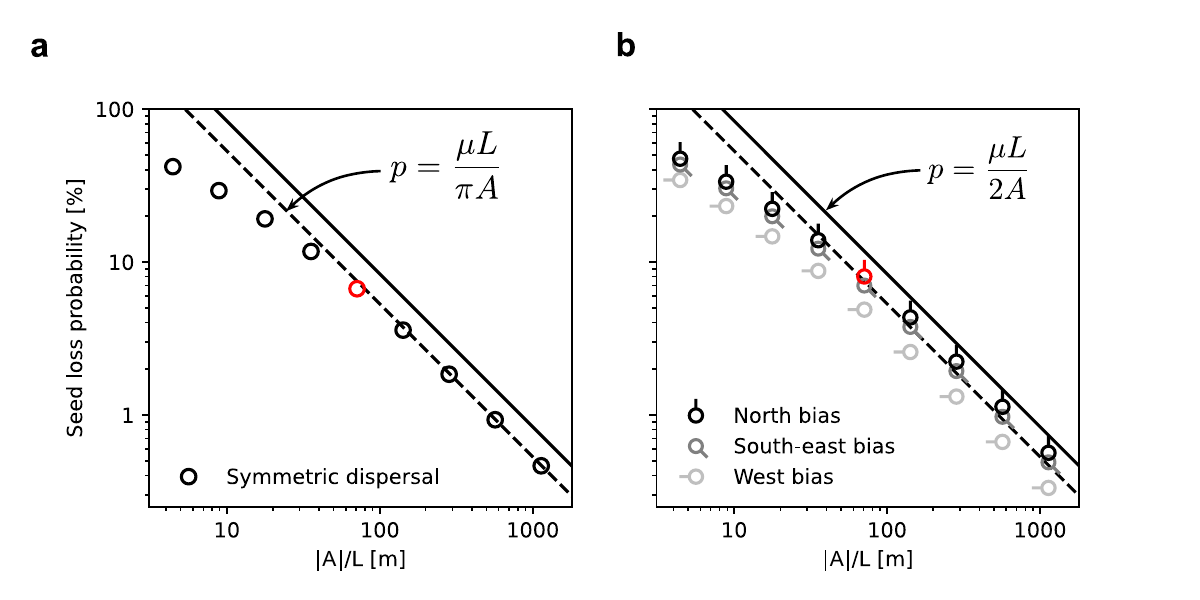}
    \caption{Impact of area to perimeter ratio $A/L$, on simulated seed loss probability $p$. Red symbols represent the two examples shown above, while other symbols represent hypothetical scenarios where McKnight Prairie is uniformly scaled up or down by a factor of 2, 4, 8, or 16, thus adjusting $A/L$. As the habitat grows larger, the symmetric bound (dashed line) becomes sharp, whereas the asymmetric bound (solid line) remains an over-estimate, especially for the alternative bias directions shown in gray
    }
    \label{fig:plots}
\end{figure}

\subsection{Comparison of seed loss bounds to simulations}\label{subsec:compare}

To illustrate a wider range of seed loss scenarios, we also simulated the effect of increasing or decreasing the habitat's area to perimeter ratio $A/L$, which is inversely proportional to the bounds on $p$. To change of $A/L$, we scaled the habitat size up or down uniformly. (Equivalently, we could have held the habitat size constant and scaled the mean dispersal distance $\mu$ down and up.)
As shown in Figure \ref{fig:plots}a, the simulated results trend closely with the theoretical bounds over a wide range of $A/L$ values. For larger $A/L$ values, the simulated seed loss probability for symmetric dispersal in particular shows remarkable agreement with the predicted bound. In these cases where the habitat is significantly larger than the length-scale of dispersal, $\mu L/\pi A$ becomes not only a bound but also a close estimate for $p$.

We suspect that this general assertion about the sharpness of the symmetric bound could be proved, if formalized correctly---perhaps by generalizing the argument presented in Appendix A2 that $p$ approaches $\mu L/\pi A$ in a specific example.

In the case of asymmetric dispersal (Figure \ref{fig:plots}b), the bound $\mu L/2A$ remains an overestimate of simulated seed loss even as $A/L$ increases. The degree of overestimate depends on the bias direction of dispersal---for example, simulated seed loss is lower when dispersal is biased west, because fewer seeds are lost across the habitat's long north and south edges. It is interesting to note that although $\mu L/\pi A$ no longer serves as an upper bound in the asymmetric case, it does appear to capture the average seed loss among different dispersal directions as $A/L$ becomes large.

\section{Discussion}\label{sec:disc}

Whereas numerical simulation of $p$ requires detailed knowledge and coding of habitat geometry and plant dispersal kernels, the bounds of Theorem \ref{thm:main} require only mean dispersal distance $\mu$, habitat perimeter $L$, and habitat area $A$. We therefore anticipate that the bound will provide an accessible upper estimate and worst-case-scenario indicator of seed loss across a variety of applications. 
Because mean dispersal distance is the most challenging component to estimate in practice, we discuss methods for doing so in section \ref{subsec:estimating_mu} before outlining directions for future research in section \ref{subsec:future}.

\subsection{Estimating mean dispersal distance} \label{subsec:estimating_mu}

Estimating mean dispersal distance is perhaps most straightforward for wind-dispersed seeds, for which mechanistic models are well-developed. In particular, the ``ballistic" model
\begin{equation}\label{eq:muballistic}
    \mu=\frac{v_wh}{v_T}
\end{equation}
has long been used to estimate dispersal distance $\mu$ in terms of seed release height $h$, terminal velocity $v_T$, and wind speed $v_w$ (for a history, see \cite{nathan2012mechanistic}), and it remains the mean dispersal distance in the more sophisticated probabilistic WALD model that incorporates turbulent airflow \citep{katul2005mechanistic}.  Both wind speed $v_w$ and seed release height $h$ can be obtained from field measurements, while terminal velocity $v_T$ can be determined in a lab.  Sullivan and colleagues \citeyearpar{sullivan2018mechanistically} describe measurement methods for these quantities and report values for 50 grassland species at Cedar Creek Ecosystem Science Reserve in Isanti County, Minnesota. 

In the example in Section \ref{sec:examples}, estimating wind velocity was the most difficult part of using equation \eqref{eq:muballistic}---particularly matching wind data to the time and location of seed release. At the time of writing, the season of seed release was 5 months in the future and the height of wind sensors at the nearest weather stations was orders of magnitude higher than the height of seed release. 
To overcome such obstacles, one can use proxy measurements from similar sites (as we did) or wait for the season of seed release to take on-site field measurements.

For seeds dispersed by mechanisms other than wind, existing literature offers several starting points for estimating mean dispersal distance $\mu$. Rough estimates can be obtained from a meta-analysis of 148 mostly empirical sources conducted by Thomson and colleagues \citeyearpar{thomson2011seed}. In the course of studying relationships with plant height and seed mass, the authors condensed data on the mean dispersal distances of over 200 species into summary statistics grouped by 8 dispersal mechanisms (e.g. unassisted, water, vertebrate), presented in Appendix S3. Because the report does not provide data by species or list the primary research articles it draws upon, additional effort is required to narrow these order-of-magnitude bounds. 

First, one can consult existing literature for proxy estimates. For example, Bullock and colleagues published a synthesis report in \citeyear{bullock2017synthesis} that 
lists dispersal data sources by species for 144 vascular plant species in Table S4 of the online supplement. 
Second, novel analyses can be performed to estimate dispersal distances. Methods for estimating dispersal across a variety of mechanisms are reviewed by \cite{rogers2019total}. 
By considering a combination of dispersal estimates, one can develop a range that seems plausible for the species and site under consideration.

\subsection{Future directions}\label{subsec:future}

As noted in Section \ref{subsec:compare}, a salient area that remains to be explored is the conditions under which the quantities $\mu L/2A$ or $\mu L/\pi A$ serve not only as upper bounds but also as close estimates for $p$. If, as conjectured, the bounds are tight when dispersal is symmetric and mean dispersal distance is short relative to habitat size, then $\mu L/\pi A$ could be used to estimate seed loss directly in such settings.

One variable to consider while exploring tightness of bounds is the connectedness of habitat patches. Although our illustrations and example have featured a single connected habitat patch, the bounds also apply to a habitat ``patch" consisting of multiple nearby components that are disconnected by roadways or other land uses. (In this case, the total perimeter and area of all components are summed to determine $L$ and $A$, respectively.) Because dividing a habitat patch into two components via a narrow roadway may increase total perimeter $L$ without substantially altering area or seed loss, the bounds of Theorem \ref{thm:main} may significantly overestimate seed loss for disconnected habitats. (Of course, a roadway could also alter dispersal patterns by changing animal behavior, wind, or water flow in ways that violate the assumptions of our model, as discussed in the next paragraph.) 

Further theoretical efforts could work towards seed loss bounds that avoid one or several of the simplifying assumptions that underpin the dispersal model \eqref{eq:pdef}. 
First, we assume that the dispersal location random variable $\Xd$ and plant location random variable $\Xp$ are independent. However, dispersal patterns might vary according to plant location based on the effects of topographic slope and habitat edges on wind, water, or animal behavior. For example, in experimental patches of longleaf pine savanna embedded in a pine plantation matrix, Damschen and colleagues \citeyearpar{damschen2014fragmentation} found significant differences in wind velocities and propagule dispersal patterns at various locations within a patch. Patch shape influenced this phenomenon, with
long corridors of savanna in particular promoting uplifting and long distance dispersal.
Second, we assume that plant locations $\Xp$ follow a uniform distribution within a habitat, but variations in local site conditions could bias the plant distributions. 
In keeping with the orientation of this work towards estimates that use minimal information inputs, it would be particularly interesting to find bounds that allow for the variations described above without requiring their full characterization.
Lastly, we note that the binary classification of land as habitable or not habitable may be unrealistic for some ecosystems. Indeed, dispersal out of targeted reserves can have positive spillover effects on the surrounding landscape \citep{brudvig2009landscape}, pointing towards a more nuanced interpretation of seed loss from a habitat patch. Combining estimates of seed loss with estimates of establishment in the surrounding matrix (e.g. \cite{craig2011edge}) could yield insights into the potential benefits of dispersal out of a reserve.  

\backmatter

\subsection*{Statements and Declarations}

\bmhead{Competing Interests}
The authors declare no competing interests.

\bmhead{Data Availability}
Our estimate of \emph{A. syriaca}'s mean dispersal distance (see equation \eqref{eq:ballistic}) is based on plant height and seed terminal velocity data measured by \cite{sullivan2018mechanistically}, as well as hourly wind speed data from \cite{seeley2021hourly}. The satellite image of McKnight Prairie used in Figure \ref{fig:sim} is from \cite{google2024}.

\pagebreak

\begin{appendices}

\section{Sharpness of bounds}\label{secAsharp}

\subsection{Asymmetric dispersal}
In Theorem \ref{thm:main} we prove that, without any assumptions on the symmetry of the dispersal kernel, the proportion $p$ of seeds lost from a habitat patch with area $A$ and perimeter $L$ is bounded by
\begin{equation}\label{Aeq:p1}
    p\leq \frac{\mu L}{2A},
\end{equation}
where $\mu$ is the mean dispersal distance of the seeds. Rearranging, this is equivalent to
\begin{equation*}
    \frac{pA}{\mu L}\leq \frac{1}{2}.
\end{equation*}
Let $k=\dfrac{pA}{\mu L}$, so that the bound \eqref{Aeq:p1} is equivalent to $k\leq 1/2$. The following proposition implies that this bound is the lowest possible under the assumptions given in Section \ref{sec:modeling}.

\begin{proposition}
    There exist habitat geometries and dispersal kernels that make $k\equiv\dfrac{pA}{\mu L}$ arbitrarily close to $1/2$.
\end{proposition}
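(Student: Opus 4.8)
The plan is to exhibit an explicit one-parameter family of habitats and dispersal kernels along which $k = pA/(\mu L)$ climbs arbitrarily close to $1/2$ from below. The guiding intuition comes from inspecting where slack enters the proof of Lemma~\ref{lem:Gmax_asym}: the factor $1/2$ appears precisely because the inequality $\y\cdot\n \le \tfrac{\y\cdot\n}{2}+\tfrac{|\y\cdot\n|}{2}$ discards every seed whose displacement points inward across $\partial H$. So a near-extremal example should (i) lose essentially every seed and (ii) use a sharply directional kernel, so that each seed is pushed out across only ``half'' of the perimeter. A long thin rectangle with dispersal aimed straight across its short dimension achieves both.

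Concretely, I would fix a small $\delta>0$ and a large $\ell>0$, take $H=(0,\ell)\times(0,1)$ so that $A=\ell$ and $L=2(\ell+1)$, and let $\Xd$ be uniformly distributed on the disk $B_\delta\big((0,\,1+\delta)\big)$ of radius $\delta$ centered at $(0,1+\delta)$. This $\fd$ is a genuine probability density with compact support and is identical at every plant location, so Theorem~\ref{thm:main} applies and already gives $k\le 1/2$; the remaining work is to show the reverse direction is nearly attained. Computing the three ingredients of $k$ directly: the $y$-coordinate of $\Xd$ lies in $(1,\,1+2\delta)$ while the $y$-coordinate of the uniformly placed plant $\Xp$ lies in $(0,1)$, so $\Xs=\Xp+\Xd$ has $y$-coordinate greater than $1$ almost surely; hence $\Xs\in H^c$ with probability one, i.e.\ $p=1$. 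At the same time $|\Xd|\le 1+2\delta$, so $\mu\le 1+2\delta$. Therefore
\begin{equation*}
    k \;=\; \frac{pA}{\mu L} \;=\; \frac{\ell}{\mu\,\cdot 2(\ell+1)} \;\ge\; \frac{1}{1+2\delta}\cdot\frac{\ell}{2(\ell+1)} ,
\end{equation*}
and the right-hand side tends to $1/2$ as $\delta\to 0^+$ and $\ell\to\infty$. Given $\varepsilon>0$, one picks $\delta$ small and $\ell$ large to make this exceed $\tfrac12-\varepsilon$, and combined with $k\le\tfrac12$ this proves the claim.

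The proposition is short and I do not anticipate a genuine obstacle once the example is chosen; the one point requiring care is the tuning of the kernel's offset against the strip width — the dispersal must be just long enough to carry every seed clear of the far edge (forcing $p=1$) yet with $\mu$ as close as possible to that width, while the strip must be long enough that its two short edges are a negligible fraction of $L$. If a smooth example is preferred, rounding the corners of the rectangle and replacing the uniform disk kernel with a smooth bump supported on it perturbs $A$, $L$, and $\mu$ by an arbitrarily small amount and leaves $p=1$ intact.
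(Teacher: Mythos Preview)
Your argument is correct. Like the paper, you take a long thin rectangle and a strongly directional kernel aimed across the short dimension, then let the aspect ratio blow up so that the two short edges contribute negligibly to $L$. The genuine difference is in the choice of kernel: the paper uses a uniform density on $[0,a]\times[0,b]$ with the origin on its boundary, so only a fraction of seeds escape and one must bound the double integral for $pA$ from below (their inequality \eqref{Aeq:pHaB2}); you instead translate the kernel's support entirely past the far edge of the strip, forcing $p=1$ outright and collapsing that computation to a single line. Your route is more elementary and the estimates are cleaner; the paper's construction has the minor conceptual advantage that its kernel keeps the plant in (the closure of) its own support, which some readers may find more natural for a ``dispersal'' kernel, but nothing in the modeling framework of Section~\ref{sec:modeling} requires this.
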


\begin{proof}
    Let $a<A$ and $b<B$ be positive real numbers.
    Consider a rectangular habitat
    \begin{equation*}
        H=\{(x,y)\in\mathbb{R}^2 \ | \ 0<x<A \text{ and } 0<y<B\},
    \end{equation*}
    and a dispersal kernel
    \begin{equation*}
        \fd(x,y)=\begin{cases}
            \frac{1}{ab} & 0 < x < a \text{ and } 0 < y < b \\
            0 & \text{otherwise}.
        \end{cases}
    \end{equation*}    
    We bound the factors in $k$ in terms of $a$, $b$, $A$, and/or $B$.
    First, we have the equality 
    \begin{equation}\label{Aeq:L}
        L=2(A+B).
    \end{equation} Second, from the definition \eqref{eq:mu} of mean dispersal distance we have 
    \begin{equation}\label{Aeq:mu}
    \begin{aligned}        \mu&=\iint\limits_{\mathbb{R}^2}|\xd|\fd(\xd)\, dA\\
    &=\int_0^b\int_0^a \sqrt{x^2+y^2}\frac{1}{ab}\,dx\,dy\\
    &\leq \int_0^b\int_0^a (x+y)\frac{1}{ab}\,dx\,dy\\
    &=\frac{a+b}{2}.
    \end{aligned}
    \end{equation}
    From \eqref{Aeq:L} and \eqref{Aeq:mu} it follows that
    \begin{equation}\label{Aeq:k1}
        k\geq \frac{pA}{\left(\frac{a+b}{2}\right)\cdot 2(A+B)}=\frac{pA}{(a+b)(A+B)}.
    \end{equation}
    Next we bound $pA$, which from Proposition \ref{prop:pHfd} is equivalent to the integral
    \begin{equation}\label{Aeq:pH}
        pA=\iint\limits_{\xp \in H} \ \ \iint\limits_{\x \in H^c} \fd(\x - \xp) \, dA \, dA.
    \end{equation}
    By restricting the domains of integration in equation \eqref{Aeq:pH}, we can only decrease the integral of the non-negative quantity $\fd$. Let $\xp=(x_p,y_p)$. Choosing bounds for $\x$ and $\xp$ over which $\fd(\x-\xp)= 1/ab$ (see Figure \ref{fig:A1}), we obtain  
    \begin{align}
        pA&\geq \int_0^B\int_{A-a}^{A}\left( \int_{y_p}^{y_p+b}\int_A^{x_p+a}\frac{1}{ab}\,dx\,dy\right)dx_p\,dy_p\label{eq:bounds}\\
        &=\int_0^B\int_{A-a}^{A}\left(
            \frac{x_p+a-A}{a}
            \right)dx_p\,dy_p\\
        &=\frac{B}{a} \int_{A-a}^{A}(x_p+a-A) \,dx_p,
    \end{align}
    which simplifies to
    \begin{equation}\label{Aeq:pHaB2}
        pA\geq\frac{aB}{2}.
    \end{equation}
    Combining inequalities \eqref{Aeq:k1} and \eqref{Aeq:pHaB2} yields
    \begin{equation*}
        k\geq\frac{aB}{2(a+b)(A+B)}.
    \end{equation*}
    For habitats with $B>>A$ the ratio $\dfrac{B}{A+B}$ can be made arbitrarily close to 1. Similarly, for dispersal kernels with $a>>b$, the ratio $\dfrac{a}{a+b}$ can also be made arbitrarily close to 1. It follows that $k$ can be made arbitrarily close to $\frac{1}{2}$. 
\end{proof}
\begin{figure}[t]
    \centering
    \includegraphics[width=0.4\textwidth]{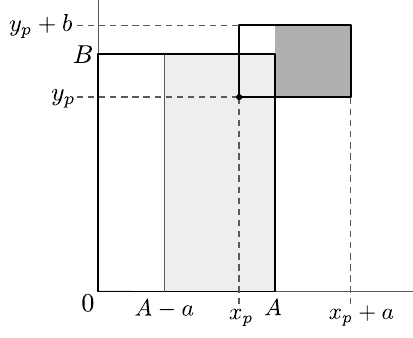}
    \vspace{1em}
    \caption{Quantities used to bound the integral \eqref{eq:bounds}. The light and dark grey rectangles represent the domains of integration for the outer and inner double integrals, respectively}
    \label{fig:A1}
\end{figure}
Thus, no values below $1/2$ suffice to bound $k$ from above. It is interesting to note that the ultimate quantities compared in this argument concern the aspect ratio of the rectangular habitat $H$ and the rectangular support for the dispersal kernel $\fd$. In particular, we imagine the aspect ratios becoming extreme in orthogonal directions. 

\pagebreak

\subsection{Symmetric dispersal}
Following the approach in Appendix A.1, we define 
\begin{equation}\label{eq:ksym}
    k=\frac{pA}{\mu L}
\end{equation}
so that the existence of habitat geometries and dispersal kernels that make $k$ arbitrarily close to $1/\pi$ implies that $p$ can be  arbitrarily close to its bound $\dfrac{\mu L}{\pi A}$ in Theorem \ref{thm:main} when dispersal is rotationally symmetric.\\

\begin{proposition}\label{prop:rotat_sharp}
    There exist habitat geometries and rotationally symmetric dispersal kernels that make $k\equiv\dfrac{pA}{\mu L}$ arbitrarily close to $1/\pi$.
\end{proposition}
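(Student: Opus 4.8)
The plan is to mirror the asymmetric construction in Appendix A.1 but with a rotationally symmetric kernel, and to exploit a habitat whose shape forces the outward-flux bound of Lemma~\ref{lem:Gmax_symm} to be nearly attained along essentially the entire boundary. The cleanest choice is a \emph{thin annular habitat}: let $H$ be an annulus of inner radius $R$ and outer radius $R + w$ with $w \ll R$, and let the dispersal kernel $\fd$ be rotationally symmetric with its mass concentrated at a scale $\delta$ with $w \ll \delta \ll R$ (for instance uniform on a disk of radius $\delta$, or any symmetric kernel with mean dispersal distance $\mu \approx \delta$). The idea is that for a plant anywhere in the thin ring, almost all of its seeds land outside $H$ (they overshoot the ring in the radial direction on both sides), so $p \to 1$, while $A = \pi\big((R+w)^2 - R^2\big) \approx 2\pi R w$ is small and $L = 2\pi\big((R+w) + R\big) \approx 4\pi R$. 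Then $k = pA/(\mu L) \approx (1)(2\pi R w)/(\delta \cdot 4\pi R) = w/(2\delta)$, which unfortunately goes to $0$, not $1/\pi$ — so the annulus alone is not the right object, and the construction must instead keep $A$ comparable to $\mu L$.

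A better route is to track where the $1/\pi$ in Lemma~\ref{lem:Gmax_symm} comes from and engineer near-equality in the chain of inequalities leading to it, then integrate that near-equality over $\partial H$. Concretely, I would take a \emph{large disk} $H = B_\rho(\mathbf 0)$ of radius $\rho$ together with a rotationally symmetric kernel whose mean dispersal distance $\mu$ is small relative to $\rho$. For a plant at distance $s$ from the center, its individual seed-loss probability is negligible unless $s$ is within $O(\mu)$ of the boundary circle, and for such near-boundary plants the local geometry is approximately a half-plane, so the loss probability is approximately the probability that a symmetric displacement crosses a fixed line, integrated against the inward penetration depth. Carrying out the polar integral in the half-plane limit, the contribution of a plant at depth $t = \rho - s$ below the boundary is $\int_0^{\pi/2} \int_t^\infty \ldots$, and averaging over $t \in (0,\infty)$ with weight $1/A$ and multiplying by the boundary length $L = 2\pi\rho$ should reproduce exactly $\mu L/(\pi A)$ in the limit $\mu/\rho \to 0$. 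Equivalently, one recognizes that in this regime the bound $\G(\x)\cdot\n \le \mu/\pi$ of Lemma~\ref{lem:Gmax_symm} is asymptotically attained at every boundary point, because the only slack introduced in that lemma's proof — expanding $\hat H$ to all of $\mathbb{R}^2$ and dropping the inward-flux term — becomes negligible when the habitat is locally flat and large on the scale of $\mu$.

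So the actual steps are: (1) fix the disk $H = B_\rho(\mathbf 0)$ and a fixed symmetric kernel rescaled to mean distance $\mu$, and compute $A = \pi\rho^2$, $L = 2\pi\rho$ exactly; (2) use Proposition~\ref{prop:pHfd} to write $pA$ as the double integral of $\fd(\x - \xp)$ over $\xp \in H$, $\x \in H^c$, and change to "signed distance to the boundary circle" coordinates for both variables; (3) show that as $\mu/\rho \to 0$ this integral, divided by $\mu L$, converges to the universal constant obtained from the half-plane computation; (4) verify that constant equals $1/\pi$ by matching against the polar form of $\mu$ in equation~\eqref{eq:mu_polar}, exactly as in the proof of Lemma~\ref{lem:Gmax_symm}. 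The main obstacle is Step~(2)–(3): making rigorous the "locally a half-plane" approximation, i.e. controlling the curvature correction in the boundary-distance change of variables and the error from plants deep in the interior whose (exponentially or polynomially small) contribution must be shown negligible relative to $\mu L \sim \rho$. This is a dominated-convergence / uniform-tail-bound argument, routine in spirit but where all the real work lies; everything else is bookkeeping with the explicit disk geometry.
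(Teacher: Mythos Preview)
Your second construction---a disk habitat $H=B_\rho(\mathbf 0)$ with a rotationally symmetric kernel whose scale $\mu$ is small compared to $\rho$---is exactly the geometry the paper uses, and your half-plane computation in Step~(4) does indeed yield the constant $1/\pi$ (one can check directly that $pA \to 2\rho\mu$ and $\mu L = 2\pi\rho\mu$ in that limit). So the high-level idea is correct and matches the paper.

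Where you and the paper diverge is in execution. You propose to keep the kernel generic and justify the boundary-layer/half-plane limit by a dominated-convergence and curvature-correction argument; you rightly flag this as ``where all the real work lies.'' The paper avoids that work entirely by choosing a specific, compactly supported kernel---uniform on a disk of radius $R_d$---so that (i) only plants within distance $R_d$ of $\partial H$ contribute at all, eliminating any interior tail estimate, and (ii) the inner integral $\iint_{H^c} f_d(\x-\xp)\,dA$ can be bounded below by an explicit elementary integral over a circular segment cut by the tangent line. This yields the clean inequality $k \ge (R_H - R_d)/(\pi R_H)$ directly, with no asymptotic analysis. Your route would prove a slightly stronger statement (any symmetric kernel works in the large-disk limit), but at the cost of the analytic machinery you describe; the paper's route trades generality for a short, fully explicit computation.
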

\begin{proof}
Define a disk habitat $H$ of radius $R_H$; that is, 
\begin{equation*}
    H=\{\x: \ |\x|<R_H\},
\end{equation*}
and take the dispersal kernel $\fd$ to be a uniform distribution over a disk support with radius $R_d$:
\begin{equation*}
    \fd(\x)=\begin{cases}
        \dfrac{1}{\pi R_d^2} & \text{if }|\x|<R_d\\
        0 & \text{otherwise}.
    \end{cases}
\end{equation*}
Simple geometry gives that 
\begin{equation}\label{eq:L}
L=2\pi R_H.
\end{equation} And by definition we have
\begin{equation}\label{eq:mu_sym}
\begin{aligned}
    \mu&=\iint\limits_{\mathbb{R}^2} |\x|\fd(\x) \, dA\\
    &=\int_0^{2\pi}\int_0^{R_d} r \frac{1}{\pi R_d^2} r \, dr \, d\theta\\
    &=\frac{2}{3}R_d.
\end{aligned}
\end{equation}
Substituting for $pA$, $L$, and $\mu$ in equation \eqref{eq:ksym} using Proposition \ref{prop:pHfd} and equations \eqref{eq:L} and \eqref{eq:mu_sym} yields
\begin{equation*}
    k=\frac{3}{4\pi R_d R_H} \iint\limits_{\xp \in H} \ \ \iint\limits_{\x \in H^c} \fd(\x - \xp) \, dA \, dA.
\end{equation*}
Because the only plant positions $\xp$ that contribute to seed loss are those at distance between $R_H-R_d$ and $R_H$ from the origin, we express the outer double integral in polar coordinates as 
\begin{equation}
    k=\frac{3}{4\pi R_d R_H} \int_0^{2\pi} \int_{R_H-R_d}^{R_H}  \left( \ \ \iint\limits_{\x \in H^c} \fd(\x - \xp) \, dA \right)\, r_p \, dr_p \, d\theta
\end{equation}
where $\xp=(r_p,\theta_p)$.
We use Cartesian coordinates $\x=(x,y)$ for the inner double integral, with the $x$-axis aligned with the vector $\xp$ (see Figure \ref{fig:A2}). 
\begin{figure}[t]
    \centering
    \includegraphics[width=0.3\textwidth]{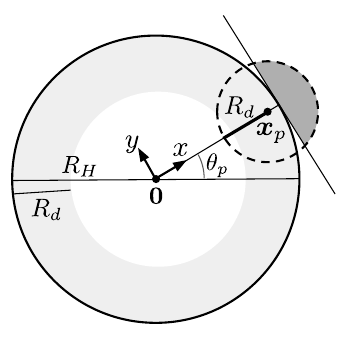}
    \caption{Quantities and coordinates used in the proof of Proposition \ref{prop:rotat_sharp}}
    \label{fig:A2}
\end{figure}
To simplify computations we restrict the domain of the inner integral to the segment of the circle of radius $R_d$ centered on $\xp$ that is cut  by the chord tangent to the habitat boundary at the $x$-axis.
Then
\begin{equation}
    k\geq\frac{3}{4\pi R_d R_H} \int_0^{2\pi} \int_{R_H-R_d}^{R_H}  \int_{R_H}^{r_p+R_d} \int_{-\sqrt{R_d^2-(x-r_p)^2}}^{\sqrt{R_d^2-(x-r_p)^2}} \ \frac{1}{\pi R_d^2} \ dy \, dx \, r_p \, dr_p \, d\theta.
\end{equation}
The $\theta$ and $y$ integrals can be evaluated directly, yielding
\begin{align}
    k&\geq\frac{3}{4\pi R_d R_H} 2\pi  \int_{R_H-R_d}^{R_H}  \int_{R_H}^{r_p+R_d} 2\sqrt{R_d^2-(x-r_p)^2}\frac{1}{\pi R_d^2}\, dx \, r_p \, dr_p\\
    &=\frac{3}{\pi R_d^3 R_H}\int_{R_H-R_d}^{R_H}  \int_{R_H}^{r_p+R_d} \sqrt{R_d^2-(x-r_p)^2} \ dx \, r_p \, dr_p.
\end{align}
Because $r_p>R_H-R_d$ over the domain of integration, one can eliminate the factor of $r_p$ from the integrand as follows:
\begin{equation}
    k\geq \frac{3(R_H-R_d)}{\pi R_d^3 R_H}\int_{R_H-R_d}^{R_H}  \int_{R_H}^{r_p+R_d} \sqrt{R_d^2-(x-r_p)^2} \ dx  \, dr_p.
\end{equation}
We first make the substitution $u=x-r_p$ and then reverse the order of integration:
\begin{align}
    k&\geq \frac{3(R_H-R_d)}{\pi R_d^3 R_H}\int_{R_H-R_d}^{R_H}  \int_{R_H-r_p}^{R_d} \sqrt{R_d^2-u^2} \ du  \, dr_p\\
    &=\frac{3(R_H-R_d)}{\pi R_d^3 R_H}\int_{0}^{R_d}  \int_{R_H-u}^{R_H} \sqrt{R_d^2-u^2} \ dr_p \, du\\
    &=\frac{3(R_H-R_d)}{\pi R_d^3 R_H}\int_{0}^{R_d} u \sqrt{R_d^2-u^2} \  du.
\end{align}
Evaluating the remaining integral yields $R_d^3/3$, and so 
\begin{equation}
    k\geq \frac{R_H-R_d}{\pi R_H}.
\end{equation}
By taking $R_H>>R_d$, $k$ can be made arbitrarily close to $1/\pi$, as desired.
\end{proof}

\section{Discussion of F}\label{secAF}

For the purpose of proving Theorem \ref{thm:main}, the vector field $\F$ was a convenient mathematical stepping stone tied to the idea of seed flow to build intuition. But a more precise physical interpretation is worth noting. Specifically, \textit{the flux of} $\F$ \textit{across a curve can be interpreted as the probability that a given seed would pass through that curve if all seeds traveled in straight lines}. Here, we show that this physical description of $\F$ leads logically to its mathematical definition in equation \eqref{eq:F}.

\begin{figure}[h!]
    \centering
    \includegraphics[width=0.4\textwidth]{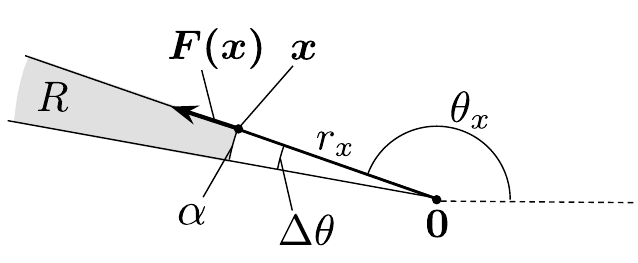}
    \caption{Notation used in the construction of the vector field $\F(\x)$}
    \label{fig:Fconstruct}
\end{figure}

Consider a single plant at the origin with dispersal kernel $\fd$. Given a location $\x$ in the landscape with polar coordinates $(r_x,\theta_x)$, we define an arc $\alpha$ starting at $\x$ and subtending an angle $\Delta\theta$. According to our interpretation above, the flux of $\F$ across $\alpha$ must equal the probability that a seed passes through $\alpha$. Since we are working under the hypothetical assumption that seeds travel in straight lines, the seeds that pass through $\alpha$ are exactly those that land in the infinite polar rectangle $R=\{(r,\theta) \ | \ r\geq r_x, \, \theta_x\leq \theta \leq \theta_x+\Delta\theta \}$ shown in Figure \ref{fig:Fconstruct}. Thus,
\begin{equation} \label{eq:fluxA1}
    \text{flux of $\F$ across $\alpha$} = P(\Xd\in R) = \int_{\theta_x}^{\theta_x+\Delta\theta}\int_{r_x}^\infty \fd(r,\theta) \, r \, dr \, d\theta.
\end{equation}
To be consistent with the straight-line dispersal hypothetical, we choose $\F$ to point radially outward from the origin. Then since $\F$ is normal to $\alpha$, its flux across $\alpha$ can also be expressed as
\begin{equation} \label{eq:fluxA2}
    \text{flux of $\F$ across $\alpha$} = \int_{\theta_x}^{\theta_x+\Delta\theta} |\F(r_x, \theta)| r_x \, d\theta,
\end{equation}
where $r_x \, d\theta$ serves as the differential arc length $ds$. After combining equations \eqref{eq:fluxA1} and \eqref{eq:fluxA2}, we may shed the $\theta$ integrals by dividing by $\Delta\theta$ and then taking the limit as $\Delta\theta$ approaches zero, leaving
\begin{equation}
    |\F(\x)|r_x = \int_{r_x}^\infty \fd(r,\theta_x) \, r \, dr.
\end{equation}
Solving for $|\F(\x)|$ and then multiplying by the unit vector $\x/r_x$ to recover $\F$'s direction gives our definition:
\begin{equation*}
    \F(\x) = \frac{\x}{r_x^2}\int_{r_x}^\infty \fd(r,\theta_x) \, r \, dr.
\end{equation*}

\end{appendices}

\bibliography{sn-bibliography}


\end{document}